\documentclass[12pt, a4paper, onecolumn, accepted=2022-06-01]{quantumarticle}
\pdfoutput=1

\usepackage{amsfonts}
\usepackage{amsthm}
\usepackage{amsmath}
\usepackage{amscd}
\usepackage[latin2]{inputenc}
\usepackage{bbold}
\usepackage{t1enc}
\usepackage[mathscr]{eucal}
\usepackage{indentfirst}
\usepackage{graphicx}
\usepackage{graphics}
\usepackage{pict2e}

\numberwithin{equation}{section}

\theoremstyle{plain}
\newtheorem{Th}{Theorem}[section]
\newtheorem{Lemma}[Th]{Lemma}
\newtheorem{Cor}[Th]{Corollary}

\newtheorem{Pro}[Th]{Proposition}

 \theoremstyle{definition}

\newtheorem{?}[Th]{Problem}
\DeclareMathOperator{\infstor}{inf.\! stor}
\DeclareMathOperator{\asymm}{asymm}
\DeclareMathOperator{\lindim}{lin.\! dim}
\DeclareMathOperator{\affdim}{aff.\! dim}
\DeclareMathOperator{\signdim}{sign.\! dim}
\DeclareMathOperator{\ran}{ran}
\DeclareMathOperator{\Info}{Info}
\newcommand{\R}{\mathbb{R}}

\newcommand{\la}{\lambda}

\newcommand{\C}{\mathbb{C}}

\newcommand{\tr}{\operatorname{tr}}

\begin{document}

\title[Classical simulations of communication channels]
{Classical simulations of communication channels}

\author[P. E. Frenkel]{P\'eter E. Frenkel}
 
\address{E\"{o}tv\"{o}s Lor\'{a}nd University,
  P\'{a}zm\'{a}ny P\'{e}ter s\'{e}t\'{a}ny 1/C, Budapest, 1117 Hungary \\ and R\'enyi Institute,  Budapest, Re\'altanoda u.\ 13-15, 1053 Hungary}
\email{frenkelp265@gmail.com}

\thanks{Research  partially supported by MTA R\'enyi
``Lend\"ulet'' Groups and Graphs Research Group, 
 by  ERC Consolidator Grant 648017 and by NKFIH  grants K 124152 and KKP 139502.}


 \keywords{Quantum channel, noise, classical simulation, signalling dimension.}

{\it In memoriam Katalin Marton}

\begin{abstract}\bf We investigate whether certain non-classical communication channels can be simulated by a classical channel with a  given number of states and a given `amount' of noise.  It is proved that any  noisy quantum channel can be simulated by a corresponding classical channel with `the same amount' of noise. Classical simulations of general probabilistic channels are also studied.\rm
\end{abstract}

\maketitle

\section*{Introduction}
A communication protocol with $l$ possible inputs and $k$ possible outputs  can be described by a \emph{transition matrix} $A=(a_{ij})\in [0,1]^{k\times l}$, where $a_{ij}$ is the conditional probability of output $i$  if the input is $j$. This is a \emph{stochastic} matrix
: for all $j$, we have $\sum_{i=1}^ka_{ij}=1$. A \emph{communication channel} can be described  by the set of transition matrices that it affords.  Channel Q \emph{can be simulated} by channel C  if all transition matrices afforded by Q are convex combinations of transition matrices afforded by C.  Such convex combinations occur naturally in information theory; they correspond to the sender and receiver having access to (unlimited)  shared randomness. The relation `can be simulated by' is obviously reflexive and transitive.  Two channels are \emph{equivalent} if each can be simulated by the other.

The \emph{classical channel with $n$ states} affords stochastic 0-1 matrices with at most $n$ nonzero rows.   The \emph{quantum channel of level $n$} affords channel matrices of the form $(\tr E_i\rho_j)$, where $\rho_1, \dots, \rho_l\in M_n(\C)$ are \emph{density matrices},  
and $E_1, \dots, E_k\in M_n(\C)$ is a \emph{positive operator valued measure (POVM)}. 
  It is easy to see that the classical channel with $n$ states can be simulated by the quantum channel of level $n$. By \cite[Theorem 3]{FW} of Weiner and the present author, the converse also holds.  The present paper is about variants of this theorem   for  general probabilistic channels (Section~\ref{GPT}) and for noisy quantum channels (Section~\ref{quantum}).  In Section~\ref{noiselesssimul}, we discuss noiseless classical simulations of noisy channels.  Section~\ref{future} contains an open problem tentatively linking classical simulations of quantum channels to the more traditional way of comparing  efficiency of classical and quantum communication, involving von Neumann entropy, mutual information and Holevo's inequality. The reader who is  interested in  quantum information theory but not in general probabilistic theory can safely skip Section~\ref{GPT}. 

\bigskip\noindent
\bf Notations and terminology.
\rm The set $\{1,\dots, k\}$ is denoted by  $[k]$.  For a real number $a$, we write  $a_+=\max (a, 0)$. The indicator of an event $A$ is written $\mathbb 1(A)$. A \emph{convex body} is a  convex compact set with nonempty interior.

 A matrix is \it stochastic
\rm if all entries are nonnegative reals and each column sums to 1.  The set of $n$-square matrices with complex entries is written $M_n(\C)$. The identity matrix is $\bf 1$.
A complex  matrix $A$  is \it psdh \rm if it is positive semi-definite Hermitian, written $A\ge 0$. A \emph{positive operator valued measure (POVM)} is a sequence $E_1$, \dots, $E_k$ of psdh matrices summing to $\bf 1$.
A \it density matrix \rm is a psdh matrix with trace 1.


For $0\le\delta\le 1$, the \emph{$\delta$-noisy classical channel with $n$ states} affords transition matrices of the form $EX\in [0,1]^{k\times l}$, where  $E\in \{0,1\}^{k\times n}$ is a stochastic 0-1 matrix and $X$ 
  is an $n\times l$ stochastic matrix with each column containing $n-1$ entries equal to $\delta/n$. Here  $E$ can be interpreted as  a classical decoding map $[n]\to[k]$, and the columns of $X$ can be interpreted as extremal $\delta$-noisy classical states. The presence of noise impedes the exact transmission of pure states, the pure state chosen by the sender is transmitted unchanged with probability $1-(n-1)\delta/n$ but turns into each one  of the other $n-1$ pure states with probability $\delta/n$.
Note that the 0-noisy classical channel with $n$ states is  the same as the classical channel with $n$ states defined previously. 

Following the  terminology of \cite{A, DC}, the \emph{signalling dimension} $\signdim \mathrm Q$ of a  channel $\mathrm Q$ is the smallest positive integer $n$ such that  $\mathrm Q$ can be simulated by the (noiseless) classical channel with $n$ states.

\section{General probabilistic theory}\label{GPT}
Let $S$ be  a convex body in a finite dimensional real affine space. Let $E$ be the cone of \emph{effects}, i.e., affine linear functions $e:S\to[0,\infty)$.  A \emph{partition of unity} is   a  sequence $e_1, \dots, e_k\in E$ of effects such that $e_1+\dots +e_k=1$  (the constant 1 function). The \emph{channel with state space $S$}  affords transition matrices of the form $(e_i(x_j))\in[0,1]^{k\times l}$, where $x_1, \dots, x_l\in S$, and $e_1$, \dots, $e_k$ is a partition of unity.

\subsection{Signalling dimension vs.\ information storability}
Following terminology introduced in \cite{A}, the \emph{signalling dimension} $\signdim S$ of $S$ is the signalling dimension of the channel with state space $S$, i.e., the smallest positive integer $n$ such that the channel with state space $S$ can be simulated by the classical channel with $n$ states.  By \cite[Theorem 3]{FW} mentioned in the Introduction, the signalling dimension of the set of $n$-square density matrices is $n$.

Calculating, or even efficiently estimating the signalling dimension of a given convex body seems to be a   difficult problem, and strong general theorems are yet to be searched for. In this section, we  start with weak  general results and work our way towards deeper results for   special cases.

The \emph{affine dimension} $\affdim S$ of $S$ is the minimal dimension of an affine space containing $S$.  Adding 1, we get the  \emph{linear dimension} $\lindim S$ of $S$, i.e., the dimension of the vector space of affine linear functions on $S$.  For example, the affine dimension of the set of $n$-square density matrices is $n^2-1$, while its linear dimension is $n^2$.

A partition of unity is \emph{extremal} if it cannot be written as a convex combination of two partitions of unity in a nontrivial way. The nonzero effects appearing in an extremal partition of unity need not lie on  extremal rays of the cone $E$ of effects. When they do, a characterization of extremal partitions of unity  is given in \cite[Theorem 2]{A}. We now give a necessary condition of extremality for a general partition of unity. Although  this
is implicitly contained  in the paper cited above (see the proof given there), we include  a proof.

\begin{Pro}\label{rang}  The nonzero effects in an extremal partition of unity are linearly independent. Thus, their number is $\le$ the linear dimension of $S$.
\end{Pro}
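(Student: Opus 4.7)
The plan is to prove the contrapositive: if the nonzero effects in a partition of unity $e_1,\dots,e_k$ are linearly dependent, then the partition can be written as a nontrivial convex combination of two distinct partitions of unity, hence is not extremal.

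Suppose, after re-indexing, that $e_1,\dots,e_m$ are the nonzero effects and that there exist coefficients $c_1,\dots,c_m\in\R$, not all zero, with
\[
\sum_{j=1}^m c_j e_j = 0 \quad \text{as a function on } S.
\]
I would then introduce the perturbed sequences
\[
e_j^\pm := (1 \pm \varepsilon c_j)\, e_j \quad (j=1,\dots,m), \qquad e_j^\pm := 0 \quad (j>m),
\]
for a sufficiently small $\varepsilon>0$. The key verifications are: (i) each $e_j^\pm$ is a nonnegative affine function, i.e.\ an effect — this holds whenever $\varepsilon \max_j |c_j| < 1$, since $1\pm\varepsilon c_j>0$ and $e_j\ge 0$; (ii) both sequences sum to the constant $1$, because $\sum_j e_j^\pm = \sum_j e_j \pm \varepsilon \sum_j c_j e_j = 1 \pm \varepsilon\cdot 0 = 1$, which is exactly where the assumed linear dependence is used; and (iii) the two partitions $(e_j^+)$ and $(e_j^-)$ are distinct, as they differ in any coordinate where $c_j\ne 0$ (and some such $j$ has $e_j\ne 0$ by assumption). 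Finally, $e_j = \tfrac12(e_j^+ + e_j^-)$ exhibits the required nontrivial convex decomposition, contradicting extremality.

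For the second assertion, it suffices to note that the space of affine linear functions on $S$ has dimension $\lindim S$ by definition, so any linearly independent subset has cardinality at most $\lindim S$.

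I do not anticipate a genuine obstacle: the only delicate point is choosing $\varepsilon$ small enough to keep the perturbed effects nonnegative, which is routine because all $e_j$ are bounded on the compact set $S$ and the perturbation is a multiplicative rescaling rather than an additive one, so nonnegativity is automatic as soon as $1\pm\varepsilon c_j>0$.
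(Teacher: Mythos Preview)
Your proof is correct and follows essentially the same approach as the paper: both perturb the partition by the multiplicative factors $1\pm\varepsilon c_j$ along a putative linear dependence $\sum c_j e_j=0$, observe that the perturbed sequences are again partitions of unity, and use extremality to force $c_j e_j=0$ for all $j$. The only difference is that you frame it as a contrapositive and separate out the zero effects beforehand, which is cosmetically slightly more explicit but not substantively different.
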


\begin{proof} Let $e_1$, \dots, $e_k$ be an extremal partition of unity. If $\la_1 e_1+\dots +\la_k e_k=0$ and $|\epsilon|\le 1/\max \{|\la_i|:  \la_i\ne 0\}$, then $(1\pm\epsilon\la_1)e_1$, \dots, $(1\pm\epsilon\la_k)e_k$ is also a partition of unity, which must coincide with $e_1$, \dots, $e_k$ because of extremality. Thus $\la_ie_i=0$ for all $i$.
\end{proof}

Consider the   transition matrix $A=(a_{ij})\in [0,1]^{k\times l}$ of some communication protocol, where $a_{ij}$ is the conditional probability of output $i\in[k]$ if the input was $j\in[l]$. Let us try to guess the input based on the output, using a  function $G:[k]\to [l]$.  If input $j$ occurs with  probability $q_j$, then the probability of success will be $$\sum_{j=1}^l q_j\sum _{i=1}^ka_{ij}\mathbb 1(G(i)=j)=\sum_{i=1}^kq_{G(i)}a_{i,G(i)}.$$  Choosing the best possible guessing function $G$, the probability of success is  $$\sum_{i=1}^k\max_j q_ja_{ij}.$$  Without any communication, the probability of successfully guessing the input, with the optimal  strategy, is $\max_j q_j$.  The ratio $\sum_{i=1}^k\max_j q_ja_{ij}/\max_j q_j$ is maximized when $q_j=1/l$ for all $j$, in which case it simplifies to  $\sum_{i=1}^k\max_j a_{ij}$.  Motivated by these considerations, and following \cite{MK} by Matsumoto and Kimura, the \emph{information storability} $\infstor S$ of $S$ is defined to be the maximum of $\sum_{i=1}^k\max_j a_{ij}$
over all transition matrices $(a_{ij})$  afforded by $S$, or, equivalently, the maximum of   $\sum_{i=1}^k\max_S e_{i}$  over all partitions of unity $e_1$, \dots, $e_k$.
When taking these maxima, it suffices to consider extremal partitions of unity. Then  Proposition~\ref{rang} and a  simple compactness argument shows  that these maxima are attained.

As a simple example, let $S=[0,1]$. Then we can choose the partition of unity $1=x+(1-x)$ to show that  $$\infstor S\ge \max_{0\le x\le 1}x+\max_{0\le x\le 1}(1-x)=1+1=2.$$ On the other hand, as any  affine  linear function on $S$ takes its maximum at 0 or 1, we have $$\sum_{i=1}^k\max_S e_i=\sum_{e_i(0)\ge e_i(1)} e_i(0)+\sum_{e_i(0)< e_i(1)} e_i(1)\le 1+1=2$$ for any partition of unity $e_1$, \dots, $e_n$ on $S$,  whence $\infstor S=2$.
This is the easiest special case of \cite[Theorems 1 and 4]{MK},  cited below in relation to  Theorem~\ref{signdim} and Proposition~\ref{MK}.

By \cite[Theorem 4]{MK},  $\infstor S\le \lindim S$. We  refine this inequality as follows.

\begin{Th}\label{signdim}
\begin{enumerate}
\item $\infstor S\le\signdim S \le \lindim S$.

\item
If  $\infstor S\le \affdim S$, then $\signdim S\le \affdim S$.

\end{enumerate}
\end{Th}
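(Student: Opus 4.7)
The first inequality $\infstor S \le \signdim S$ is a direct convexity argument. Setting $n := \signdim S$, any transition matrix $A = (a_{ij})$ afforded by $S$ admits a decomposition $A = \sum_\alpha \lambda_\alpha B^{(\alpha)}$ with each $B^{(\alpha)}$ a stochastic $0$-$1$ matrix having at most $n$ nonzero rows, so $\sum_i \max_j B^{(\alpha)}_{ij} \le n$. Since $A \mapsto \sum_i \max_j a_{ij}$ is convex, $\sum_i \max_j a_{ij} \le n$; taking the supremum over $A$ gives $\infstor S \le n$.

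For $\signdim S \le \lindim S$, fix any $A = (e_i(x_j))$ afforded by $S$ and hold $x_1, \dots, x_l$ fixed. The set of partitions of unity on $S$ is compact convex, and the map $(e_i) \mapsto A$ is linear; its image is therefore a compact convex subset of $\R^{k \times l}$, every extreme point of which lifts to an extremal partition. So $A$ is a convex combination of transition matrices arising from extremal partitions, and it suffices to treat that case. By Proposition~\ref{rang}, an extremal partition's nonzero effects are indexed by some $I$ with $|I| \le n := \lindim S$. For such a partition, I would simulate $A$ by the classical $n$-state channel via column-wise independent randomization: for each $f \colon [l] \to I$ set $p_f := \prod_j e_{f(j)}(x_j)$ and let $B^{(f)}$ be the $k \times l$ matrix with $B^{(f)}_{ij} = \mathbb{1}(f(j) = i)$. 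Using $\sum_{i \in I} e_i(x_j) = 1$, a routine calculation gives $A = \sum_f p_f B^{(f)}$, and each $B^{(f)}$ is a stochastic $0$-$1$ matrix with at most $|I| \le n$ nonzero rows.

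For part~(2), set $n := \affdim S$ and apply the same reduction. Extremal partitions with at most $n$ nonzero effects are handled exactly as above. The only new case is an extremal partition $(e_0, \dots, e_n)$ with all $n + 1$ effects nonzero, which by Proposition~\ref{rang} must form a basis of affine functions on $S$. Then $A$ has at most $n + 1$ nonzero rows, but the hypothesis yields
\[
\sum_i \max_j a_{ij} \le \sum_i \max_S e_i \le \infstor S \le n.
\]
So (2) reduces to the key combinatorial lemma: any stochastic matrix $A$ with $\sum_i \max_j a_{ij} \le n$ is a convex combination of stochastic $0$-$1$ matrices with at most $n$ nonzero rows. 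This lemma is the main obstacle. I would attempt it by introducing auxiliary variables $z_i \ge a_{ij}$ and analysing the polytope $\{(A, z) : A \text{ stochastic},\ z_i \ge a_{ij},\ \sum_i z_i \le n\}$; a vertex analysis should force $z_i \in \{0, 1\}$, and hence $A$ to be $0$-$1$ with at most $n$ nonzero rows, completing the proof.
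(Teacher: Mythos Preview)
Your treatment of part~(1) is correct and essentially identical to the paper's: reduce to extremal partitions of unity, invoke Proposition~\ref{rang}, and observe that a stochastic matrix with at most $\lindim S$ nonzero rows is classically simulable.

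In part~(2), your reduction to the case of an extremal partition with exactly $n+1$ nonzero effects is also what the paper does. The trouble is the ``key combinatorial lemma'' you state afterwards. As written --- \emph{any} stochastic matrix $A$ with $\sum_i\max_j a_{ij}\le n$ is a convex combination of stochastic $0$-$1$ matrices with at most $n$ nonzero rows --- it is \emph{false}. The paper itself supplies the counterexample: the $4\times 6$ matrix
\[
A=\tfrac12\begin{pmatrix} 1&0&1&0&1&0\\1&0&0&1&0&1\\0&1&1&0&0&1\\0&1&0&1&1&0\end{pmatrix}
\]
arising from the octahedron has $\sum_i\max_j a_{ij}=4\cdot\tfrac12=2$, yet the paper proves (via the two-row-sum invariant) that $A$ is \emph{not} a convex combination of stochastic $0$-$1$ matrices with at most two nonzero rows. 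So the lemma genuinely needs the restriction $k=n+1$ that your reduction provides; you should state it with that hypothesis.

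With $k=n+1$ the statement is true, but your proposed vertex argument is only a hope, not a proof. In fact the literal claim ``vertices have $z_i\in\{0,1\}$'' is already wrong: for $n=2$, $k=3$, $l=1$ the point $(A,z)=\bigl((1,0,0)^{\!T},(2,0,0)\bigr)$ is a vertex of your polytope with $z_1=2$. One would need a more careful analysis showing that at every vertex the matrix $A$ has a zero row, and you have not supplied one. The paper instead gives a short constructive argument: with $m_i=\max_S e_i$ one has $\sum_i(1-m_i)\ge (n+1)-\infstor S\ge 1$, so there is a probability vector $p$ with $p_i\le 1-m_i$; then, column by column, the Supply--Demand (Hall-type) theorem lets one write the $j$-th column of $A$ as $\sum_i p_i\, b_j(i)$ with each $b_j(i)$ a stochastic vector having $i$-th entry zero, yielding $A=\sum_i p_i B(i)$ with each $B(i)$ missing row~$i$. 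This is both simpler and complete.
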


This theorem is closely related to \cite[Theorem 1(i)]{DC}.

\begin{proof}
(1)  Let $n=\signdim S$. Any transition matrix afforded by $S$ is a convex combination of transition matrices afforded by the classical channel with $n$ states. Such a matrix has $\le n$ nonzero rows and therefore sum of row-maxima $\le n$. This property is preserved  when taking convex combinations. This proves the first inequality.

Any transition matrix afforded by $S$ is a convex combination of transition matrices of the form  $(e_i(x_j))$, where $e_1$, \dots, $e_k$ is an \emph{extremal}  partition of unity, and $x_j\in S$. By Proposition~\ref{rang}, such a matrix has $\le\lindim S$ nonzero rows, and therefore is a convex combination  of matrices afforded by the classical channel with $\lindim S$ states. This proves the second inequality.

(2)
Let $\infstor S\le \affdim S=n$. Any transition matrix afforded by $S$ is a convex combination of matrices of the form $A=(a_{ij})\in [0,1]^{k\times l}$, where  $a_{ij}=e_i(x_j)$, $e_1$, \dots, $e_k$ is an \emph{extremal} partition of unity, and $x_j\in S$. We shall show that such an $A$ is always a convex combination of transition matrices afforded by the classical channel with $n$ states.  Using Proposition~\ref{rang}, we may assume that $k=n+1$.  Set $m_i=\max_S e_i\in [0,1]$ for each $i\in [k]$. 
Note that  
$\sum_{i=1}^k(1-m_i)\ge n+1-\infstor S\ge 1$.   Choose a probability distribution $p_1$, \dots, $p_k$ such that  $p_i\le 1-m_i$ for all $i$. Then $$p_i\le 1-a_{ij}=\sum_{i'\ne i} a_{i'j}$$  for all $i$ and $j$, and  $$\sum_{i\in T}p_i\le 1=\sum_{i=1}^k a_{ij}$$ for all $T\subseteq [k]$.

For any fixed $j$, put  supply $a_{ij}$ and demand $p_i$  at each node $i$  of the complete (but loopless)  graph on  $k$ nodes. Then, for the total supply  at the neighbors of any  subset $T\subseteq [k]$, we have $$\sum_{i\in N(T)}a_{ij}\ge \sum_{i\in T}p_i.$$  By the Supply--Demand Theorem~\cite[2.1.5.\ Corollary]{LoPlu}, the demands can be met: there exist stochastic column vectors $b_j(1)$, \dots, $b_j(k)$  such that  the $i$-th entry of $b_j(i)$ is zero for all $i$, and $\sum_{i=1}^kp_ib_j(i)$  is the $j$-th column of $A$. Now let $B(i)$ be the matrix with columns  $b_1(i)$, \dots, $b_l(i)$.  Then the $i$-th row of $B(i)$ is zero, so $B(i)$ has $\le k-1=n$ nonzero rows, so $B(i)$ is a convex  combination of transition matrices afforded by the classical channel with $n$ states.  Then so is $A$, since  $$A=\sum_{i=1}^k p_iB(i).$$
\end{proof}

For the remainder of this section, assume that $S$ is not just a point.  A \emph{chord} \rm  of $S$ is a segment $AB$ whose endpoints $A$ and $B$ belong to the boundary  of $S$.  We write $AOB$ for a chord $AB$ with a distinguished point $O$ on the chord. The convex body $S$ is \emph{centrally symmetric} if  there exists a point $O\in S$ such that for any chord $AOB$ of $S$, we have $|OA|=|OB|$ for the lengths of the segments $OA$ and $OB$.
The \emph{Minkowski measure of asymmetry}  $\asymm S$ of $S$ is  the smallest real number $m\ge 1$  such that  there exists a point $O\in S$ such that for any chord $AOB$ of $S$, we have $|OB|\le m|OA|$.

 By \cite[Theorem 1]{MK}  of Matsumoto and Kimura, the information storability  is related to the Minkowski measure of asymmetry  as follows.

\begin{Pro}\label{MK}  $\infstor S = \asymm S +1$
\end{Pro}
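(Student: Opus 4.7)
I would prove the inequalities $\infstor S \le \asymm S + 1$ and $\infstor S \ge \asymm S + 1$ separately.

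For the upper bound, fix an interior point $O$ of $S$ with $\asymm_O S = m$. I would show that every effect $e$ on $S$ satisfies $\max_S e \le (1+m)\,e(O)$. Indeed, pick $x^*\in\partial S$ where $e$ is maximised, and let $y^*\in \partial S$ be the other endpoint of the chord of $S$ through $O$ and $x^*$. Writing $O = \la x^* + (1-\la) y^*$ with $\la = |Oy^*|/|x^*y^*|$ and using $e(y^*)\ge 0$ gives $e(O)\ge \la\,e(x^*)$, whence $\max_S e \le e(O)(1+|Ox^*|/|Oy^*|)\le (1+m)\,e(O)$ by the defining chord-ratio bound. Summing over any partition of unity $e_1,\dots,e_k$ and using $\sum_i e_i(O) = 1$ yields $\sum_i\max_S e_i \le 1+m$; taking infimum over $O$ gives $\infstor S \le 1+\asymm S$.

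For the lower bound, I would translate so that a point $O$ realising $\asymm_O S = \asymm S = m$ lies at the origin, and work through the support function $h_S(v)=\max_{x\in S}\langle v,x\rangle$. Every effect has the form $e(x)=\langle a,x\rangle +b$ with $b\ge h_S(-a)$, and $\max_S e = h_S(a)+b$. A partition of unity of effects then corresponds to pairs $(a_i,b_i)$ with $\sum_i a_i = 0$, $\sum_i b_i = 1$, and $b_i\ge h_S(-a_i)$, the objective being $\sum_i(h_S(a_i)+b_i) = 1+\sum_i h_S(a_i)$. The condition $\asymm_O S = m$ translates into $h_S(v)\le m\,h_S(-v)$ for all $v\ne 0$. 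I would exhibit a finite family of \emph{extremal} directions $a_i\ne 0$ satisfying $h_S(a_i)=m\,h_S(-a_i)$ whose sum vanishes, rescale them so $\sum_i h_S(-a_i) = 1$, and set $b_i = h_S(-a_i)$. This produces a valid partition of unity achieving $\sum_i\max_S e_i = (1+m)\sum_i h_S(-a_i) = 1+m$.

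The main obstacle is producing this family: one must show that the set $E$ of extremal directions is not contained in any open half-space $\{v:\langle w,v\rangle>0\}$ with $w\ne 0$. If it were, then perturbing the centre to $\epsilon w$ would, by direct computation, strictly decrease the asymmetry ratio $(h_S(v)-\epsilon\langle w,v\rangle)/(h_S(-v)+\epsilon\langle w,v\rangle)$ at every $v\in E$. A compactness argument on the unit sphere---using continuity of the ratio together with the fact that non-extremal directions give ratios bounded away from $m$---then ensures the global maximum remains strictly below $m$ for all sufficiently small $\epsilon>0$, contradicting the definition of $\asymm S$. Once $E$ is not contained in any open half-space, a Gordan-type alternative (equivalently, Carath\'eodory applied to $0$ in the conic hull of $E$) supplies finitely many $a_i\in E$ with $\sum_i \la_i a_i = 0$ for positive $\la_i$, and absorbing the $\la_i$ into the $a_i$ yields the required collection.
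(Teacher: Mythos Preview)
Your proof is correct. The upper bound $\infstor S \le \asymm S + 1$ is essentially identical to the paper's: both derive $\max_S e \le (1+m)\,e(O)$ from the chord-ratio condition at an optimal centre and sum over a partition of unity.

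For the lower bound you take a genuinely different route. The paper sets $n=\infstor S$, reads the inequality $\sum_i \max_S e_i \le n$ for the particular partitions of unity given by barycentric coordinates of simplices $\Delta\supseteq S$, and then invokes Helly's theorem to produce a single point $O$ lying deep enough between every pair of parallel supporting hyperplanes, whence $\asymm_O S \le n-1$. You instead start from an optimal centre $O$ and exploit its first-order optimality: a perturbation argument shows the set of extremal directions (where $h_S(v)=m\,h_S(-v)$) cannot lie in an open half-space, and a Gordan/Carath\'eodory step then extracts a finite zero-sum subfamily from which you build an explicit partition of unity attaining $\sum_i \max_S e_i = 1+m$. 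Your argument is more constructive---it actually exhibits an optimal measurement---and avoids Helly's theorem; the paper's argument is shorter and more geometric but does not identify the optimiser. Both are valid, and your compactness step (bounding the ratio uniformly below $m$ on $\{\langle w,v\rangle\le 0\}$ and strictly below $m$ on $\{\langle w,v\rangle>0\}$) goes through as you indicate.
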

Although this is a known statement, we include the sketch of a geometric proof for the convenience of the reader.
\begin{proof}
$\le$:  
There exists a point $O\in S$ such that for any chord $AOB$ of $S$, we have $|OB|\le (\asymm S)|OA|$.
 Let $n=\asymm S +1$.  Then $e(x)\le ne(O)$ for all  $e\in E$ and $x\in S$, whence  $$\sum_{i=1}^k\max_S e_{i}\le n\sum_{i=1}^ke_i(O)=n$$   for all partitions of unity $e_1$, \dots, $e_k$.

$\ge$: Let $n=\infstor S$. Then   $\sum_{i=1}^k\max_S e_{i}\le n$   for all partitions of unity $e_1$, \dots, $e_k$.  When $k$ is the linear dimension of $S$, this tells us that for any simplex $\Delta$ containing $S$, there exists a point each of whose  barycentric coordinates with respect to $\Delta$  is at least $1/n$ times  the maximum value of that barycentric coordinate on $S$.
  Using Helly's theorem, we see that  there exists a point $O$ that divides  the distance between any two parallel supporting hyperplanes of $S$  in a ratio at least as equitable as $1:(n-1)$.  Then, for any chord $AOB$  of $S$ with $|AO|\le |OB|$, considering the supporting hyperplane  of $S$ at $A$  and the parallel supporting hyperplane, we get that $|OB|\le (n-1)|OA|$.  
\end{proof}

\begin{Cor} For the regular octahedron, we have $\asymm =1$, $\infstor =2$,  $\signdim=\affdim =3$, and $\lindim =4$.
\end{Cor}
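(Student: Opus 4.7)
My plan is to get three of the four equalities almost for free from results established earlier in the paper and then put the real work into the lower bound $\signdim S\ge 3$. The regular octahedron $S$ is centrally symmetric about its barycenter, so every chord through this point is bisected; hence $\asymm S=1$, and Proposition~\ref{MK} yields $\infstor S=\asymm S+1=2$. Being a three-dimensional convex body, $S$ has $\affdim S=3$ and $\lindim S=\affdim S+1=4$. The inequality $\infstor S=2\le 3=\affdim S$ is exactly the hypothesis of Theorem~\ref{signdim}(2), which gives $\signdim S\le \affdim S=3$.

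The main obstacle is to establish $\signdim S\ge 3$ by exhibiting a single transition matrix afforded by $S$ that is not a convex combination of stochastic $0$--$1$ matrices with at most two nonzero rows. I would realise $S$ as the unit cross-polytope $\{x\in\R^3:|x_1|+|x_2|+|x_3|\le 1\}$ and choose four sign vectors $\alpha_1,\dots,\alpha_4\in\{\pm 1\}^3$ with $\sum_i \alpha_i=0$, for instance
\[
\alpha_1=(+,+,+),\ \alpha_2=(+,-,-),\ \alpha_3=(-,+,-),\ \alpha_4=(-,-,+).
\]
Setting $e_i(x):=(1-\alpha_i\cdot x)/4$, each $e_i$ is nonnegative on $S$ because $|\alpha_i\cdot x|\le\|x\|_1\le 1$, and the zero-sum condition gives $\sum_i e_i\equiv 1$, so $\{e_1,\dots,e_4\}$ is a partition of unity. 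Taking the six vertices of $S$ as inputs produces a $4\times 6$ transition matrix $A$ with entries in $\{0,1/2\}$.

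The combinatorial key is that the six column supports of $A$ are exactly the six two-element subsets of $[4]$: the column at vertex $v$ is supported on $\{i:\alpha_i\cdot v=-1\}$, and the zero-sum condition forces each coordinate sign to be realised by exactly two of the $\alpha_i$'s, so as $v$ runs over the six vertices all $\binom{4}{2}=6$ pairs appear. Now assume for contradiction that $A=\sum_t \lambda_t M_t$ with each $M_t$ a stochastic $0$--$1$ matrix supported on at most two rows $R_t\subseteq[4]$. Stochasticity of each column of $M_t$ forces its unique $1$ to lie in $R_t\cap\operatorname{supp}(A_{\cdot,c})$, so this intersection must be non-empty for every column $c$; hence $R_t$ would meet every two-element subset of $[4]$, in particular its own two-element complement, which is impossible. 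Therefore no such decomposition exists, $\signdim S\ge 3$, and the equality $\signdim S=\affdim S=3$ follows.
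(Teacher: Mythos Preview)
Your proof is correct. The easy parts (asymmetry, information storability, affine and linear dimensions, and the upper bound $\signdim S\le 3$ via Theorem~\ref{signdim}(2)) coincide with the paper's argument, and your witness matrix $A$ is essentially the same as the paper's: the rows of the paper's matrix $V$ are exactly your sign vectors $\alpha_i$, and after adding $1$ and dividing by $4$ one obtains (up to swapping columns $x\leftrightarrow -x$) the same $4\times 6$ stochastic matrix with entries in $\{0,1/2\}$ whose column supports are the six $2$-subsets of $[4]$.

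Where you genuinely diverge from the paper is in the lower-bound argument. The paper separates $A$ from the convex hull of $2$-row stochastic $0$--$1$ matrices by a \emph{linear} witness: the functional $A\mapsto\sum_{i<i'}\max_j(a_{ij}+a_{i'j})$ equals $6$ on $A$ but is $\le 5$ on every $4\times 6$ stochastic matrix with at most two nonzero rows, and this bound survives convex combinations (this is an instance of the general inequality~(3.6) in~\cite{FW}). Your argument is instead a direct support-based one: any term $M_t$ appearing with positive weight in a putative decomposition of $A$ must place its column-$c$ entry inside $\operatorname{supp}(A_{\cdot,c})$, so its row set $R_t$ would have to hit every $2$-subset of $[4]$, which no set of size $\le 2$ can do. This is more elementary and self-contained (no appeal to~\cite{FW}), and it exploits sharply the specific $\{0,1/2\}$ structure of $A$. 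The paper's linear-functional approach, by contrast, is the standard separating-hyperplane paradigm and would continue to work for small perturbations of $A$ and in settings where the zero pattern alone is not decisive. One minor wording point: your phrase ``its own two-element complement'' tacitly assumes $|R_t|=2$; the case $|R_t|\le 1$ is of course even easier, since the complement then contains a $2$-subset disjoint from $R_t$.
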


\begin{proof}  The regular octahedron is centrally symmetric, which means that  $\asymm =1$.
By Proposition~\ref{MK}, we have $\infstor=\asymm +1 =2$. Obviously, $\affdim =3$ and $\lindim=\affdim +1=4$.

By Theorem~\ref{signdim}(2), we have $\signdim\le 3$.  To prove the converse inequality, let $$X=\begin{pmatrix} 1 & -1 &&&&\\&&1&-1&&\\&&&&1&-1\end{pmatrix}$$  be the matrix whose columns are the vertices of the octahedron (the entries not shown are zero).  Let $$V=\begin{pmatrix} 1&1&1\\1&-1&-1\\-1&1&-1\\-1&-1&1\end{pmatrix},$$ then $$VX=\begin{pmatrix} 1&-1&1&-1&1&-1\\1&-1&-1&1&-1&1\\-1&1&1&-1&-1&1\\-1&1&-1&1&1&-1\end{pmatrix}.$$ Adding 1 to each entry and dividing by 4, we get the stochastic matrix
$$A=\frac12\begin{pmatrix} 1&0&1&0&1&0\\1&0&0&1&0&1\\0&1&1&0&0&1\\0&1&0&1&1&0\end{pmatrix},$$
which is therefore a  transition matrix afforded by the octahedron.  Since any two rows of $A$ have an 1/2 at the same position, we have $$\sum_{1\le i<i'\le 4}\max_{1\le j\le 6}(a_{ij}+a_{i'j})= {4\choose 2} =6.$$ On the other hand, any $4\times 6$ transition matrix afforded by the classical channel with 2  states has at least $4-2=2$ zero rows, so the sum above would be 
$\le  {4\choose 2}-{{4-2}\choose 2}=5$  --- note that this is a special case of  \cite[inequality (3.6)]{FW}.   This inequality  is preserved under convex combinations.  Therefore, the octahedron cannot be simulated by the classical 2-state channel, hence its signalling dimension is (at least) 3.
\end{proof}

\subsection{Noisy balls}
If an  origin is chosen in $S$, and $0\le \delta \le 1$, then the \emph{$\delta$-noisy channel with state space $S$} affords the transition matrices $(e_i(x_j))$, where $e_1$, \dots, $e_k$  is a partition of unity and  $x_j\in (1-\delta)S$ for all $j$.  This is analogous to the partial depolarization channel in quantum information theory, cf.\ Subsection~\ref{depol}. Note that $e_i\ge 0$ is required on all of $S$.

It is easy to see that if $S'=f(S)$ is an affine image of $S$, then $S'$ can be simulated by $S$. If, in addition,    $O'=f(O)$, then $\delta$-noisy $S'$ can be simulated by $\delta$-noisy $S$. In particular, a  classical bit can be simulated by $S$ unless $S$ is just a point, and a $\delta$-noisy classical bit can be simulated by any $\delta$-noisy $S\ne\{O\}$  that   is symmetric with respect to $O$.

\begin{Th}\label{ball} Let $n$ be an even positive integer. Put $$S=\{x\in \R^d: \|x\|_{n/(n-1)}\le 1\},$$  the unit ball of the  $n/(n-1)$-norm. Let $0\le \delta \le 1$.
\begin{enumerate}
\item The $\delta$-noisy channel with state space $S$ can be simulated by the $\delta$-noisy classical channel with $n$ states.
\item The signalling dimension of $S$ is $\le n$.

\item For an ellipsoid of arbitrary affine dimension $\ge 1$, the signalling dimension  is $2$. A $\delta$-noisy ellipsoid can be simulated by a $\delta$-noisy classical bit.
\end{enumerate}
\end{Th}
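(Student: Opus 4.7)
The plan is to prove~(1), from which~(2) and~(3) will follow. Part~(2) is the $\delta=0$ specialization of~(1). For~(3), note that the unit ball of $\|\cdot\|_{2/(2-1)}=\|\cdot\|_2$ is the Euclidean ball, and every ellipsoid of affine dimension $\ge 1$ is the affine image of a Euclidean ball $B^d$, with the ellipsoid centre corresponding to the origin. By the affine-image observation preceding the theorem — valid both noiselessly and, with centres matched, in the $\delta$-noisy setting — the (resp.\ $\delta$-noisy) ellipsoid channel is simulated by the (resp.\ $\delta$-noisy) Euclidean ball channel, and then by~(1) with $n=2$ it is simulated by the (resp.\ $\delta$-noisy) classical bit. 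The matching inequality $\signdim\ge 2$ is immediate because an ellipsoid of affine dimension $\ge 1$ is not a single point.

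For~(1), fix a partition of unity $e_i(x)=a_i+\langle v_i,x\rangle$ on $S$ with $a_i\ge\|v_i\|_n$ (the dual of the $n/(n-1)$-norm being the $n$-norm), $\sum a_i=1$, $\sum v_i=0$, and noisy states $x_j\in(1-\delta)S$. Writing $y_j=x_j/(1-\delta)\in S$, the transition matrix decomposes as $A=(1-\delta)A_0+\delta\,a\mathbf 1^T$, where $A_0=(e_i(y_j))$ is a noiseless $S$-transition matrix. Any matrix afforded by the $\delta$-noisy classical channel with $n$ states has the parallel form $(1-\delta)Z+(\delta/n)\sigma\mathbf 1^T$, where $Z$ is a $0/1$ stochastic matrix with at most $n$ nonzero rows and $\sigma\in\Z_{\ge 0}^k$ has entries summing to $n$, its support containing the nonzero-row set of $Z$. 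Consequently,~(1) reduces to a noiseless simulation of $A_0$ together with a compatible choice of $\sigma$-vectors.

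The heart of the proof is the noiseless case. I would construct a randomized simulation using shared randomness $u$ drawn from a measure on the dual unit $n$-sphere $\{u:\|u\|_n=1\}$. Given $u$ and input $x$, the sender picks state $s\in[n]$ with probability $q_s(x,u)=p_s(u)+\langle r_s(u),x\rangle$, subject to $\sum_s p_s=1$, $\sum_s r_s=0$, and $p_s(u)\ge\|r_s(u)\|_n$ so that $q_s\ge 0$ on $S$; the receiver decodes $d(s,u)\in[k]$. Integration over the shared randomness yields the matching conditions $a_i=\sum_s E_u[p_s(u)\mathbf 1(d(s,u)=i)]$ and $v_i=\sum_s E_u[r_s(u)\mathbf 1(d(s,u)=i)]$. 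These form a convex vector-measure feasibility problem whose global consistency (summing over $i$ in both equations) is automatic from $\sum a_i=1$ and $\sum v_i=0$. The even-$n$ hypothesis is essential: only then does $\|u\|_n^n=\sum_j u_j^n$ decompose as a sum of nonnegative monomials, and this polynomial identity underwrites an $n$-way probabilistic stratification of the dual sphere that matches the $n$ classical outputs — a stratification unavailable for odd $n$.

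The noisy completion is then essentially bookkeeping: given the noiseless decomposition $A_0=\sum_\alpha\mu_\alpha Z^{(\alpha)}$, for each scheme $\alpha$ choose $\sigma^{(\alpha)}\in\Z_{\ge 0}^k$ with $\sum\sigma^{(\alpha)}=n$ and support covering the nonzero rows of $Z^{(\alpha)}$, then use the remaining freedom in the $\sigma^{(\alpha)}$ to arrange $(1/n)\sum_\alpha\mu_\alpha\sigma^{(\alpha)}=a$. Feasibility follows from $\sum_i(a_i-(1-\delta)\|v_i\|_n)\ge\delta$, which is a direct consequence of $\sum\|v_i\|_n\le\sum a_i=1$ and provides enough slack for the noise component. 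The main obstacle of the whole argument is the noiseless construction above: Theorem~\ref{signdim} alone yields $\signdim S\le\affdim S=d$, which exceeds $n$ whenever $d>n$, so the specific geometry of the $p$-ball with $p=n/(n-1)$ and the parity of $n$ must be exploited in a way that no general bound of this type can capture.
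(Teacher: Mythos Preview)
The reductions for (2) and (3) are fine and match the paper, as does the decomposition $A=(1-\delta)A_0+\delta\,a\mathbf 1^T$. But the core of the argument --- the noiseless simulation of $A_0$ --- is not actually carried out. You describe a ``convex vector-measure feasibility problem'' and assert that the even-$n$ hypothesis ``underwrites an $n$-way probabilistic stratification of the dual sphere,'' yet no construction is given: you specify neither the measure on the sphere, nor the functions $p_s$, $r_s$, $d(s,\cdot)$, nor do you verify the matching conditions. This is precisely the hard part. The paper's device is explicit: define the symmetric $n$-linear form $\{e_1,\dots,e_n\}=c_1\cdots c_n-v_1\cdots v_n$ on affine functions $e_i=c_i+v_i$; H\"older gives $\{e_1,\dots,e_n\}\ge 0$ on effects, and the even-$n$ hypothesis (so that $\sum_j v_j^n=\|v\|_n^n$) yields $\{e,\dots,e\}\le\min_S e$ whenever $0\le e\le 1$. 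The shared randomness is then $I\in[k]^n$ with law $p_I=\{e_{i_1},\dots,e_{i_n}\}$, and the inequality $P(R^n)=\{e_R,\dots,e_R\}\le\min_S e_R\le e_R(x_j)$ is converted, via the Supply--Demand theorem, into the required coupling. None of this machinery appears in your sketch.

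The noisy completion also has a gap. You propose to fix a noiseless decomposition $A_0=\sum_\alpha\mu_\alpha Z^{(\alpha)}$ first and then choose integer vectors $\sigma^{(\alpha)}$ whose supports cover the nonzero rows of $Z^{(\alpha)}$ and which average to $a$. But whenever $Z^{(\alpha)}$ has exactly $n$ nonzero rows, $\sigma^{(\alpha)}$ is forced to be their indicator, leaving no freedom; your scalar inequality $\sum_i(a_i-(1-\delta)\|v_i\|_n)\ge\delta$ does not guarantee that an \emph{arbitrary} noiseless decomposition admits compatible $\sigma$'s. The paper avoids this entirely by building the noisy and noiseless parts from the \emph{same} randomness $I\in[k]^n$: the noisy part has entries $m(i,I)/n$ (occurrence counts of $i$ in $I$), the noiseless part $B'(I)$ is supported on indices appearing in $I$, and the identity $\int m(i,I)/n\,\mathrm dP=\{e_i,1,\dots,1\}=c_i$ makes the averages match automatically.
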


The proof below is similar to that of \cite[Theorem 3]{FW}.  However, the mixed discriminant used there (and used in Section~\ref{quantum} of the present paper) must be replaced by a  different $n$-linear symmetric function $\{\cdot, \dots, \cdot\}$.

To introduce $\{\cdot, \dots, \cdot\}$,  we can think of an affine linear function $e:S\to \R$ as a formal sum of a number and a vector: $e=c+v\in\R\oplus \R^d=\R^{d+1}$, meaning that $e(x)=c+vx$ for $x\in S$, where $vx$ is the usual inner product. For an effect $e\in E$, the condition $e\ge 0$ translates to $\|v\|_n\le c$ because $$(n/(n-1))^{-1}+n^{-1}=1.$$  Given $e_1, \dots, e_n\in \R^{d+1}$, where $e_i=c_i+v_i$, we define $$\{e_1,\dots, e_n\}=c_1\cdots c_n-v_1\cdots v_n,$$ where $v_1\cdots v_n$ means that we take the coordinatewise product and then add up the coordinates (which is an $n$-linear generalization of the usual inner product). For $n=2$, $\{\cdot,\cdot\}$ is the Lorentzian indefinite symmetric bilinear product well known from the special theory of relativity.  For general $n$,   $\{\cdot,\dots,\cdot\}$ is symmetric,  multilinear and $\{1, \dots, 1\}=1$.  When $e_1, \dots, e_n\in E$, we have
$\{e_1,\dots, e_n\}\ge 0$ by repeated application of H\"older's inequality.  Further,  if $0\le e\le 1$ holds pointwise on $S$, then writing $e=c+v$ and $a=\|v\|_n$, we have $0\le a\le\min (c, 1-c)$ and therefore  \begin{align*}\{e,\dots, e\}=c^n-v^n\overset * =c^n-a^n=\\= (c-a)(c^{n-1}+c^{n-2}a+\dots +ca^{n-2}+a^{n-1})\le \\\le (c-a)(c+(1-c))^{n-1}=c-a=\min_{x\in S} e(x).\end{align*}  Note that the equality marked by a * holds because $n$ is even.

We are now ready to start the proof of Theorem~\ref{ball}.

\begin{proof}(1)
Let  $A\in[0,1]^{k\times l}$ be  a $\delta$-noisy transition matrix afforded by $S$, i.e., $$a_{ij}=e_i((1-\delta)x_j),$$ where $x_1, \dots, x_l\in S$, $e_i\in E$, and $e_1+\dots +e_k=1$. We shall prove that $A$ is a convex combination  of $\delta$-noisy $n$-state classical transition matrices.

If $e_i=c_i+v_i$ as before, then $c_1+\dots+c_k=1$, $v_1+\dots+v_k=0$, and  $$a_{ij}=c_i+(1-\delta)v_ix_j=\delta c_i+(1-\delta)e_i(x_j),$$ so $A=\delta C+(1-\delta) A'$, where $C$ is the matrix with entries $c_{ij}=c_i$  not depending  on $j$, and $A'$ is the matrix with entries $a'_{ij}=e_i(x_j)$.

For $I=(i_1,\dots, i_n)\in [k]^n$, put
\[p_I=\{e_{i_1},\dots, e_{i_n}\}.\] 
We have $p_I\ge 0$ for all $I$. Thus, we get a  measure $P$
on $[k]^n$ defined by $$P(T)=\sum_{I\in T}p_I.$$
Using the multilinearity of the bracket and the assumption that $e_1$, \dots, $e_k$ is a partition of unity, we see that
\[P([k]^n)=\{ 1,\dots,  1\}=\rm 1,\]
so $P$ is  a probability measure.

Let $D(I)$ be the matrix with entries $d(I)_{ij}=m(i,I)/n$  not depending on $j$, where $m(i,I)$ is the number of occurrences of $i$ in the sequence $I$. Then $\int D\mathrm d P=C$ because $$\int d_{ij}\mathrm d P=\sum_{I\in[k]^n}p_Im(i,I)/n=\{e_i, 1, \dots, 1\}=c_i=c_{ij}.$$

For any $R\subseteq [k]$, we may put $e_R=\sum_{i\in R} e_i$, and then  we have
\[P(R^n)=\{e_R, \dots, e_R\}\le\min_{x\in S} e_R(x)\le e_R(x_j)\] for all $j$ since $0\le e_R\le 1$.
The right hand side here is $A'_j(R)$, where $A'_j$ is the probability measure on $[k]$ given by the numbers $e_i(x_j)$. So we have
\[A'_j(R)\ge P(R^n)\qquad\textrm{ for all }R\subseteq [k].\]

Let us connect $I\in [k]^n$ to $i\in [k]$ by an edge if $i$ occurs in $I$. This gives us a bipartite graph. The neighborhood of any set $T\subseteq [k]^n$ is the set $R\subseteq [k]$ of indices occurring in some element of $T$. We always have $T\subseteq R^n$, whence
$$A'_j(R)\ge P(R^n)\ge P(T).$$ Thus, by the Supply--Demand Theorem~\cite[2.1.5.\ Corollary]{LoPlu}, and using the fact that both $A'_j$ and $P$ are probability measures, there exists a probability measure $P_j$ on $[k]^n\times [k]$ which is supported on the edges of the graph and has marginals $P$ and $A'_j$. Whenever $p_I\ne 0$, let $B'(I)$ be the $k\times l$ stochastic matrix whose $j$-th column is given by the conditional distribution  $P_j|I$ on $[k]$.  We have  $A'=\int B'\mathrm dP$.

 Now $B(I)=\delta D(I)+(1-\delta) B'(I)$  is a convex combination of $\delta$-noisy  $n$-state classical transition matrices,  and, in turn,  $A=\int B\mathrm d P$ is a convex combination of the $B(I)$, as desired.

\bigskip

(2)  Set $\delta=0$ in (1).

\bigskip

(3) The signalling dimension of an ellipsoid is the same as that of the Euclidean unit ball. This is $\le 2$ by (2), and is $\ge 2$ because the unit ball is not a point. The noisy claim follows from (1).
\end{proof}

\section{Noisy quantum channels}\label{quantum}
Let $$K\subseteq \Delta_n=\{(\xi_1,\dots, \xi_n): \xi_i\ge0 \; \textrm{ for all } \; i, \;  \xi_1+\dots +\xi_n=1\}$$
  be  a convex set of probability distributions that is invariant under all permutations of the $n$ coordinates.  The \emph{$K$-noisy classical channel}  affords  transition matrices of the form $EX\in [0,1]^{k\times l}$, where  $X\in K^l$  is an $n\times l$ matrix  with all columns in $K$, and $E$ is a  $k\times n$ stochastic 0-1 matrix. A density matrix is  \emph{$K$-noisy} if the sequence of its eigenvalues is in $K$. The \emph{$K$-noisy quantum channel} affords transition matrices of the form $(\tr E_i\rho_j)$, where $E_1$, \dots, $E_k$ is a POVM and $\rho_j$ is a $K$-noisy density matrix for $j=1, \dots, l$.

It is easy to see that  the $K$-noisy classical channel can be simulated by  the $K$-noisy quantum channel. Our goal is to prove the converse, which is a far-reaching generalization of \cite[Theorem 3]{FW}  mentioned in the Introduction.

In fact, we may generalize further. Let $K_j\subseteq \Delta_n$   ($j=1, \dots, l$)  be   convex sets, each of them  invariant under all permutations of the $n$ coordinates.  The \emph{$(K_1, \dots, K_l)$-noisy classical channel}  affords  transition matrices of the form $EX\in [0,1]^{k\times l}$, where  $X\in K_1\times\dots\times K_l$  is an $n\times l$ matrix  with $j$-th column in $K_j$, and $E$ is a  $k\times n$ stochastic 0-1 matrix.  The \emph{$(K_1, \dots, K_l)$-noisy quantum channel} affords transition matrices of the form $(\tr E_i\rho_j)$, where $E_1$, \dots, $E_k$ is a POVM and $\rho_j$ is a $K_j$-noisy density matrix for $j=1, \dots, l$.

It is easy to see that  the $(K_1, \dots, K_l)$-noisy classical channel can be simulated by  the $(K_1, \dots, K_l)$-noisy quantum channel. We shall prove the converse.

As in \cite{FW}, our  main tool is the \emph{mixed discriminant}, the unique symmetric  $n$-linear function $D$ on $M_n(\C)$ such that  $D(E, \dots, E)=\det E$ for
all  $E\in M_n(\C)$. Explicitly, if  $E_i=\left[e_i^1, \dots, e_i^n\right]$  are the columns, then \begin{equation}\label{*} D(E_1, \dots, E_n)=\frac1{n!}\sum_ {\pi\in \mathfrak S_n} \det\left[e_{\pi(1)}^1, \dots, e_{\pi(n)}^n\right].\end{equation}

We shall need the
following inequalities.

\begin{Lemma}\label{L1}  For $\la_1, \dots, \la_n\in [0,1]$ and $r=1,2, \dots, n$, we have
\begin{equation}\label{lambda}  \sum_{Q\subseteq[n]}(r-|Q|)_+\prod_{m\notin Q}\la_m\prod_{m\in Q}(1-\la_m)\le \la_1+\dots+\la_r,\end{equation} where $a_+=\max (a, 0)$.
\end{Lemma}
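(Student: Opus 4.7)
My plan is to reinterpret both sides probabilistically. Let $X_1,\dots,X_n$ be independent Bernoulli random variables with $P(X_m=1)=\la_m$, and set $N=X_1+\dots+X_n$. For each $Q\subseteq[n]$, the event $\{m\in[n]:X_m=0\}=Q$ has probability $\prod_{m\notin Q}\la_m\prod_{m\in Q}(1-\la_m)$, and on it $|Q|=n-N$. Summing over $Q$, the left hand side of~\eqref{lambda} equals
\[
E\bigl[(r-(n-N))_+\bigr]\;=\;E\bigl[(N-(n-r))_+\bigr].
\]

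The next step, which is the only one requiring any thought, is the pointwise combinatorial identity
\[
\bigl(N-(n-r)\bigr)_+\;=\;\min_{\substack{B\subseteq[n]\\ |B|=r}}\;\sum_{m\in B}X_m.
\]
This holds because each $X_m$ is $0$ or $1$: to minimize $\sum_{m\in B}X_m$ over $r$-element subsets $B$, one packs $B$ with as many of the $n-N$ zeros as fit, leaving $r-\min(n-N,r)=(N-(n-r))_+$ ones in $B$. Taking expectations, the left hand side of~\eqref{lambda} equals $E\bigl[\min_{|B|=r}\sum_{m\in B}X_m\bigr]$.

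The inequality is then immediate. Specializing to $B=\{1,\dots,r\}$ gives the pointwise bound $\min_{|B|=r}\sum_{m\in B}X_m\le X_1+\dots+X_r$, and taking expectations yields $\la_1+\dots+\la_r$, as desired. There is essentially no obstacle once the combinatorial identity is spotted; the same argument in fact produces the stronger \emph{symmetric} statement that the left hand side of~\eqref{lambda} is bounded by $\sum_{m\in B_0}\la_m$ for \emph{every} $r$-element subset $B_0\subseteq[n]$, so the author's asymmetric formulation~\eqref{lambda} is really the weakest cut of the identity $E[\min_{|B|=r}\sum_{m\in B}X_m]\le E[\sum_{m\in B_0}X_m]$.
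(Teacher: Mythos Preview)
Your proof is correct and, once unwound, is essentially the paper's argument: the paper bounds $(r-|Q|)_+\le|[r]\setminus Q|=\sum_{s=1}^r\mathbb{1}(s\notin Q)$ pointwise and then sums, which in your probabilistic language is exactly the specialization $\min_{|B|=r}\sum_{m\in B}X_m\le\sum_{m\in[r]}X_m$ followed by taking expectations. Your intermediate identity $(N-(n-r))_+=\min_{|B|=r}\sum_{m\in B}X_m$ is a pleasant extra observation (and explains why any $r$-subset $B_0$ would do), but the inequality step and the final computation coincide with the paper's.
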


\begin{proof}We have $$(r-|Q|)_+\le \left|[r]\setminus Q\right|=\sum_{s=1}^r{\mathbb {1}} (s\notin Q)$$ for all $Q$. Thus, the left hand side of \eqref{lambda} is $$\le\sum_{s=1}^r\sum_{Q\subseteq [n]\setminus\{s\}}\prod_{m\notin Q}\la_m\prod_{m\in Q}(1-\la_m)=\sum_{s=1}^r\la_j\prod_{m\ne s}(\la_m+(1-\la_m))= \la_1+\dots+\la_r.$$
\end{proof}

\begin{Lemma}\label{L2}
For an $n$-square Hermitian matrix $0\le E\le\bf 1$ with eigenvalues $\la_1$,  \dots, $\la_n$, and $r=1,2, \dots, n$, we have  $$\sum_{q=0}^{r-1}  (r-q){n\choose q}  D(\underbrace{E, \dots, E}_{n-q}, \underbrace{{\bf 1}-E, \dots,{\bf 1}- E}_{q})\le\la_1+\dots+\la_r. $$
\end{Lemma}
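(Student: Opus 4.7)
The plan is to reduce the mixed-discriminant expression to a sum over subsets $Q\subseteq[n]$ and then invoke Lemma~\ref{L1} directly. Since $E$ and $\mathbf 1-E$ commute, they are simultaneously diagonalizable by a unitary $U$, and the mixed discriminant is invariant under simultaneous conjugation of its arguments (as is clear from formula \eqref{*} because $\det(UAU^{-1})=\det A$ and the factors $U$ on the left/right can be pulled out of the $\det$ after replacing each column vector). Hence I may assume $E=\mathrm{diag}(\la_1,\dots,\la_n)$ and $\mathbf 1-E=\mathrm{diag}(1-\la_1,\dots,1-\la_n)$, and the hypothesis $0\le E\le \mathbf 1$ translates to $\la_s\in[0,1]$ for every $s$.

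The next step is to compute mixed discriminants of diagonal matrices. If $A_i=\mathrm{diag}(\alpha_i^1,\dots,\alpha_i^n)$, then the $s$-th column of $A_i$ is $\alpha_i^s\mathbf{e}_s$, so plugging into \eqref{*} gives
\[
D(A_1,\dots,A_n)=\frac1{n!}\sum_{\pi\in\mathfrak S_n}\prod_{s=1}^n\alpha^s_{\pi(s)},
\]
i.e.\ $1/n!$ times the permanent of $(\alpha_i^s)$. Applying this with the first $n-q$ slots filled by $E$ and the last $q$ by $\mathbf 1-E$, and grouping permutations according to the set $Q=\{s:\pi(s)>n-q\}$ (which has size $q$, and for each fixed $Q$ there are $(n-q)!\,q!$ such $\pi$), I obtain the identity
\[
\binom{n}{q}D\bigl(\underbrace{E,\dots,E}_{n-q},\underbrace{\mathbf 1-E,\dots,\mathbf 1-E}_{q}\bigr)=\sum_{\substack{Q\subseteq[n]\\|Q|=q}}\prod_{m\notin Q}\la_m\prod_{m\in Q}(1-\la_m).
\]

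Summing this identity over $q=0,1,\dots,r-1$ with weights $r-q$ collapses to a sum over all subsets $Q$ with $|Q|<r$, and $(r-|Q|)_+$ vanishes whenever $|Q|\ge r$, so the left-hand side of the claimed inequality equals exactly the left-hand side of \eqref{lambda} in Lemma~\ref{L1}. The inequality $\le \la_1+\dots+\la_r$ is then immediate from Lemma~\ref{L1}.

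I do not foresee a real obstacle beyond bookkeeping: the potential pitfall is the combinatorial factor in step two, so the main thing to get right is the count $(n-q)!\,q!$ of permutations realizing a given subset $Q$, which gives the key conversion between $D(E^{n-q},(\mathbf 1-E)^{q})$ and the elementary symmetric-type sum $\sum_{|Q|=q}\prod_{m\notin Q}\la_m\prod_{m\in Q}(1-\la_m)$. Note that because this expression is symmetric in $\la_1,\dots,\la_n$, there is no loss in bounding it by $\la_1+\dots+\la_r$ specifically; one could equally bound by the sum of any $r$ eigenvalues (in particular, the $r$ smallest), but the stated form is exactly what Lemma~\ref{L1} delivers.
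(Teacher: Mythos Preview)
Your proof is correct and follows exactly the paper's route: reduce to diagonal $E$ by unitary invariance, then use formula \eqref{*} to rewrite the mixed discriminant as the subset sum appearing in Lemma~\ref{L1}. The paper compresses the second step into the single sentence ``\eqref{*} reduces Lemma~\ref{L2} to Lemma~\ref{L1}'', whereas you have spelled out the permanent computation and the $(n-q)!\,q!$ count explicitly.
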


\begin{proof}
  Since the spectrum and the mixed discriminant are both invariant under unitary conjugation, we may assume  that $E$ is a diagonal matrix. Then \eqref{*} reduces Lemma~\ref{L2} to Lemma~\ref{L1}.
\end{proof}

By Bapat's  \cite[Lemma 2(vi)]{B}, if  $E_1$, \dots, $E_n$ are all positive semidefinite Hermitian matrices, then  \begin{equation}\label{**}D(E_1, \dots, E_n)\ge 0.\end{equation}
Given a POVM $E_1, \dots, E_k\in M_n(\C)$, we define  \begin{equation}\label {p}  p_I=D(E_{i_1}, \dots, E_{i_n})\end{equation} for all $I=(i_1, \dots, i_n)\in [k]^n$. By multilinearity and \eqref{**}, this defines  a probability distribution on $[k]^n$.

\begin{Lemma}\label{L3}
If  $E_1, \dots, E_k\in M_n(\C)$ is  a POVM, $u_1$, \dots,  $u_k$  are real numbers, and $\la_1$, \dots, $\la_n$ are the eigenvalues of $E=\sum_{i=1}^k u_iE_i$,  then \begin{equation}\label{***}
\sum_{I\in[k]^n}  p_I\min\left\{\sum_{m\in S}u_{i_m}: S\subseteq [n], |S|=r\right\}\le \la_1+\dots+\la_r\end{equation} for all $r=1,2,\dots, n$.
\end{Lemma}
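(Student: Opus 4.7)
The plan is to reduce Lemma~\ref{L3} to Lemma~\ref{L2} by decomposing $E$ into a family of POVM-subsum operators indexed by a threshold, applying Lemma~\ref{L2} to each layer, and then extracting the desired statement via Ky Fan's variational principle. A convenient preliminary step is to absorb the signs of the $u_i$: the substitution $u_i\mapsto u_i+c$ sends $E$ to $E+c\mathbf 1$, shifts every $\lambda_j$ by $c$ and every $\min_{|S|=r}\sum_{m\in S}u_{i_m}$ by $rc$, and, because $\sum_{I}p_I=D(\mathbf 1,\dots,\mathbf 1)=1$, shifts both sides of~\eqref{***} by $rc$. We may therefore assume $u_i\ge 0$ for all $i$. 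The layer identity $u_i=\int_0^\infty \mathbb 1(u_i\ge t)\,dt$ then yields $E=\int_0^\infty F(t)\,dt$, where $F(t):=\sum_{i:\,u_i\ge t}E_i$.

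For every $t\ge 0$ the operator $F(t)$ is a partial sum of the POVM, so $0\le F(t)\le\mathbf 1$, and Lemma~\ref{L2} applies. Writing $\mu_1(t)\le\dots\le\mu_n(t)$ for the eigenvalues of $F(t)$,
\[
  \sum_{q=0}^{r-1}(r-q)\binom{n}{q}\,D\bigl(F(t)^{\,n-q},(\mathbf 1-F(t))^{\,q}\bigr)\le\mu_1(t)+\dots+\mu_r(t).
\]
Expanding $F(t)=\sum_{i:u_i\ge t}E_i$ and $\mathbf 1-F(t)=\sum_{i:u_i<t}E_i$ by multilinearity of the mixed discriminant and using~\eqref{p}, a short bookkeeping shows
\[
  \binom{n}{q}D\bigl(F(t)^{\,n-q},(\mathbf 1-F(t))^{\,q}\bigr)=\sum_{I\in[k]^n}p_I\,\mathbb 1\bigl(|B_I(t)|=q\bigr),
\]
where $B_I(t):=\{m:u_{i_m}<t\}$: for each $Q\subseteq[n]$ of size $q$ the indicator product singles out those tuples $I$ whose \emph{small} positions coincide with $Q$, and summing over $Q$ then gives the marginal $|B_I(t)|=q$. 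Weighting by $(r-q)$ and summing collapses the left-hand side of Lemma~\ref{L2} to $\sum_{I}p_I\,(r-|B_I(t)|)_+$.

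Now integrate the resulting inequality against $dt$ on $[0,\infty)$. For each fixed $I$, writing $u_{(1)}\le\dots\le u_{(n)}$ for the non-decreasing rearrangement of $u_{i_1},\dots,u_{i_n}$, the step function $(r-|B_I(t)|)_+$ telescopes to $\int_0^\infty(r-|B_I(t)|)_+\,dt=u_{(1)}+\dots+u_{(r)}=\min_{|S|=r}\sum_{m\in S}u_{i_m}$, recovering the left-hand side of~\eqref{***}. On the other side, Ky Fan's minimax principle gives $\mu_1(t)+\dots+\mu_r(t)=\min_P\tr(PF(t))$, with $P$ ranging over rank-$r$ orthogonal projections of $\C^n$, and since the integral of a pointwise minimum is at most the minimum of the integral,
\[
  \int_0^\infty\bigl(\mu_1(t)+\dots+\mu_r(t)\bigr)\,dt\le\min_P\int_0^\infty\tr(PF(t))\,dt=\min_P\tr(PE)=\lambda_1+\dots+\lambda_r,
\]
where $\lambda_1,\dots,\lambda_r$ are taken to be the $r$ smallest eigenvalues of $E$ (other labellings give weaker statements, which follow \emph{a fortiori} since the left-hand side of~\eqref{***} is independent of the labelling of the eigenvalues). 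Combining the two displays yields~\eqref{***}.

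The only genuine obstacle is the combinatorial bookkeeping in the middle step, where one has to match the sum over $Q$-subsets produced by multilinearity to the indicator-function description $|B_I(t)|=q$; the two layer-cake integrals and the $\min/\int$ interchange are elementary.
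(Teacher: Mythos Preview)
Your proof is correct and follows essentially the same route as the paper's. The only difference is cosmetic: the paper sorts the $u_i$ and uses the Abel-summation decomposition $E=\sum_i v_iF_i$ with $v_i=u_i-u_{i+1}$ and $F_i=E_1+\dots+E_i$, while you use the equivalent layer-cake integral $E=\int_0^\infty F(t)\,dt$; since $F(t)$ is piecewise constant and equal to the paper's $F_i$ on each interval $(u_{i+1},u_i]$, the application of Lemma~\ref{L2}, the combinatorial bookkeeping, and the Ky~Fan bound on the eigenvalue side coincide line for line.
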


\begin{proof}
We may assume that all $u_i\ge 0$ because adding $u$ to all $u_i$ adds $ru$ to both sides of \eqref{***}.  We may assume $u_1\ge\dots \ge u_k$. Put $u_{k+1}=0$.  Write  $E=\sum_{i=1}^kv_iF_i$, where $v_i =u_i-u_{i+1}$ and $F_i=E_1+\dots+E_i$.

Let $\sigma_i$ be the sum of the $r$ smallest eigenvalues of $F_i$. Then \begin{equation}\label{1} \sum_{i=1}^kv_i\sigma_i\le\la_1+\dots+\la _r.\end{equation}
As $0\le F_i\le\bf 1$, we have \begin{equation}\label{2}  \sum_{q=0}^{r-1}   (r-q)\binom{n}{q} D(\underbrace{F_i, \dots, F_i}_{n-q}, \underbrace{{\bf 1}-F_i, \dots,{\bf 1}- F_i}_{q})\le  \sigma_i\end{equation}  for all $i$, by Lemma~\ref{L2}.

On the other hand, since  $u_i=v_i+\dots +v_k$, we have $$\min\left\{\sum_{m\in S}u_{i_m}: S\subseteq [n], |S|=r\right\}=\sum_{i=1}^kv_i\left(r-|\{m\in [n]: i_m>i\}|\right)_+.$$ It remains to check that  \begin{align*}\sum_{I\in[k]^n}  p_I\left(r-|\{m\in [n]: i_m>i\}|\right)_+=\\=\sum_{q=0}^{r-1}  (r-q){n\choose q} D(\underbrace{F_i, \dots, F_i}_{n-q}, \underbrace{{\bf 1}-F_i, \dots,{\bf 1}- F_i}_{q})\end{align*} for all $i\in [k]$. This follows from \begin{align*}\sum\left(p_I:I\in [k]^n, |\{m\in [n]:i_m>i\}|=q\right)=\\={n\choose q} D(\underbrace{F_i, \dots, F_i}_{n-q}, \underbrace{{\bf 1}-F_i,\dots,{\bf 1}- F_i}_{q}),\end{align*} which is clear from the definitions of $p_I$ and $F_i$, and from the symmetry and multilinearity of $D$.
\end{proof}

We are ready for the main result of this paper.

\begin{Th}\label{main} The $(K_1, \dots, K_l)$-noisy quantum channel can be simulated by  the $(K_1, \dots, K_l)$-noisy classical channel.
In particular, the $K$-noisy quantum channel can be simulated by the $K$-noisy classical channel.
\end{Th}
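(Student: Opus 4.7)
\textbf{Proof plan for Theorem~\ref{main}.}
My goal is to exhibit any transition matrix $A=(\tr E_i\rho_j)$ afforded by the $(K_1,\dots,K_l)$-noisy quantum channel as a convex combination
\[A=\sum_{I\in[k]^n}p_IE(I)X(I),\]
where $p_I=D(E_{i_1},\dots,E_{i_n})$ is the probability distribution \eqref{p}, $E(I)$ is the $k\times n$ stochastic $0$--$1$ matrix with $(E(I))_{i,m}=\mathbb 1(i_m=i)$ (the deterministic decoding $m\mapsto i_m$), and $X(I)\in K_1\times\cdots\times K_l$. Since the columns of the right-hand side decouple, it suffices to find, for each $j$ independently, vectors $y_j(I)\in K_j$ with $A_j=\sum_I p_I E(I)y_j(I)$.

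For fixed $j$, this is a convex feasibility problem. Hahn--Banach separation from the compact convex set $C_j:=\{\sum_I p_I E(I)y(I):y(I)\in K_j\}$ makes it equivalent to the scalar inequality
\[\sum_I p_I\min_{y\in K_j}\sum_{m=1}^n y_mu_{i_m}\;\le\;\sum_i u_i(A_j)_i\;=\;\tr(E\rho_j),\qquad E:=\sum_i u_iE_i,\]
holding for every $u\in\mathbb R^k$. The plan is to sandwich both sides with the common quantity $f(\la(E))$, where $f(v):=\min_{y\in K_j}y\cdot v$ is a symmetric concave function on $\mathbb R^n$ and $\la(E)$ is the spectrum of the Hermitian matrix $E$.

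The right-hand bound $\tr(E\rho_j)\ge f(\la(E))$ will follow from the Schur--Horn theorem (the diagonal $\mu$ of $\rho_j$ in an eigenbasis of $E$ is majorized by $\la(\rho_j)\in K_j$) combined with Hardy--Littlewood--P\'olya (hence $\mu\in K_j$, since $K_j$ is convex and permutation-invariant) and the rearrangement inequality. For the left-hand bound $\sum_I p_If(v_I)\le f(\la(E))$ with $v_I:=(u_{i_m})_{m=1}^n$, Lemma~\ref{L3} will enter. Writing $g_r(v)$ for the sum of the $r$ smallest entries of $v$, permutation-invariance of $K_j$ and the rearrangement inequality rewrite $f(v)=\min_{y\in K_j,\,y_1\ge\cdots\ge y_n}y\cdot v^\uparrow$; Abel summation then yields $f(v)=F(g_1(v),\dots,g_n(v))$, where $F(G):=\min_{\beta\in T_j}\sum_r\beta_rG_r$ for a convex set $T_j\subseteq\mathbb R_{\ge 0}^n$ encoding the allowed Abel coefficients $\beta_r=y_r-y_{r+1}$. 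Thus $F$ is concave and coordinate-wise non-decreasing. Jensen's inequality applied to the concave $F$, followed by monotonicity combined with Lemma~\ref{L3}'s bound $\sum_I p_Ig_r(v_I)\le g_r(\la(E))$, then delivers $\sum_I p_I f(v_I)\le f(\la(E))$.

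The step I expect to require the most care is the structural representation $f=F\circ(g_1,\dots,g_n)$: it is this factorization that converts Lemma~\ref{L3}---an inequality about each $g_r$ separately---into a bound on the single nonlinear objective $f$. A direct analog of the Supply--Demand construction used in the proof of Theorem~\ref{ball} looks harder to adapt, because a generic permutation-invariant convex $K_j\subseteq\Delta_n$ does not fit comfortably into a vanilla bipartite matching framework, whereas the convex-duality route uses the noise constraint $y\in K_j$ cleanly through the pointwise minimum $f(v)$.
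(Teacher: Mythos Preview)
Your proposal is correct and follows essentially the same route as the paper's proof: the paper also fixes the decomposition $A=\sum_I p_I E(I)X(I)$, reduces column-by-column feasibility to a linear duality statement (invoking Farkas' Lemma rather than Hahn--Banach separation, and von Neumann's trace inequality rather than Schur--Horn, but these are equivalent here), and then discharges the resulting dual inequality via Lemma~\ref{L3} combined with an Abel-type manipulation of the majorization constraints that encode $K_j$. Your packaging through the concave monotone function $F$ is a bit tidier than the paper's explicit bookkeeping with the dual weights $w_{I,H}$, but the two arguments unwind to the same computation.
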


\begin{proof}
It suffices to prove that  for any  POVM $E_1$, \dots, $E_k$, and any $K$-noisy density matrix $\rho$, there exist points $x_I=(x_{I,1}, \dots, x_{I,n})\in K$  for each  $I=(i_1, \dots, i_n)\in [k]^n$  such that \begin{equation}\label{fo}
\tr E_i\rho=\sum_{I\in [k]^n}p_I\sum(x_{I,m}:m\in [n], i_m=i)
\end{equation} for each $i\in [k]$.  Here the $p_I$ are defined as in \eqref{p}.

Let the eigenvalues of $\rho$ be $0\le\mu_1\le\dots\le \mu_n$; we have $\mu_1+\dots+\mu_n=1$.   Since $\rho$ is $K$-noisy, we have $\mu=(\mu_1, \dots, \mu_n)\in K$.   Since $K$ is convex and invariant with respect to permutations, any  convex combination of permutations of $\mu$ is in $K$. Thus, if $x\in [0,1]^n$ is a stochastic  vector, and any $r$ distinct coordinates of $x$ sum to $\ge  \mu_1+\dots +\mu_r$ for each $r=1,2, \dots, n$, then $x\in K$. If we require
\begin{itemize}
\item these $2^n$ inequalities for each $x_I$, together with

\item $x_{I,m}\ge 0$  for all $I$ and $m$, and

\item \eqref{fo} for all $i$,

\end{itemize}
then
each $x_I$ will be  a stochastic vector  since
setting $r=n$  yields  $$x_{I,1}+\dots+x_{I,n}\ge \mu_1+\dots+\mu_n=1,$$ while summing \eqref{fo} for  $i=1,2,\dots, k$  yields $$1=\sum_{I\in[k]^n} p_I(x_{I,1}+\dots+x_{I,n}).$$
Therefore, it suffices to prove that the system of $(2^n+n)k^n$ inequalities and $k$ equations above has  a solution.
By the well-known Farkas Lemma, this is equivalent to saying that  a linear combination of the  inequalities and equations in the system cannot lead to the contradictory inequality $0\ge 1$. That is, it suffices to prove that if  nonnegative numbers $w_{I,H}$  $(I\in [k]^n, H\subseteq [n])$  and real numbers $u_1$, \dots, $u_k$ satisfy
\begin{equation}\label{start}\sum(w_{I,H}:H\subseteq [n], H\ni m)\le p_I u_{i_m}\end{equation} for all  $I\in [k]^n$ and all $m\in [n]$, then \begin{equation}\label{cel}\sum_{I\in[k]^n}\sum_{H\subseteq [n]}w_{I,H}(\mu_1+\dots +\mu_{|H|})\le\sum_{i=1}^ku_i\tr E_i\rho.\end{equation}

Let $\la_1\le\dots \le\la_n$  be the eigenvalues of $u_1E_1+\dots+u_kE_k$.   By von Neumann's inequality, the  right hand side of \eqref{cel} is  $$\ge \la_1\mu_n+\dots+\la_n\mu_1.$$

The coefficient of any $\mu_t$  on the left hand side  of \eqref{cel} is  $$\sum_{I\in[k]^n}\sum_{|H|\ge t}w_{I,H},$$ so it suffices to prove that  $$\sum_{t=n-r+1}^n\sum_{I\in[k]^n}\sum_{|H|\ge t}w_{I,H}\le\la_1+\dots+\la_r$$ for $r=1,\dots, n$. In view of Lemma~\ref{L3}, this follows if  $$\sum_{t=n-r+1}^n\sum_{|H|\ge t}w_{I,H}\le p_I\sum_{m\in S}u_{i_m}$$  for all $I\in [k]^n$ and all  $S\subseteq [n]$ with $|S|=r$. This follows from \eqref{start}  and the fact that  $$\sum_{n-r<t\le|H|}1=(|H|+r-n)_+\le |S\cap H|=\sum_{m\in S\cap H} 1$$ for all $H,S\subseteq [n]$ with $|S|=r$.
\end{proof}

\section{Simulation of a noisy channel by a noiseless one}\label{noiselesssimul}
Given the $K$-noisy channel, we might try to determine its signalling dimension, i.e., simulate it by a noiseless classical channel with as few states as possible.  In view of Theorem~\ref{main}, it makes no difference whether the given channel is classical or quantum.
\begin{Th}\label{noiseless} The $K$-noisy classical (or, equivalently, quantum) channel can be simulated by the noiseless d-state classical (or, equivalently, level $d$ quantum) channel  if and only if we have \begin{equation}\label{binom} \mu_1+\dots+\mu_r\ge\binom rd\bigg/\binom nd\end{equation}  for all $\mu=(\mu_1\le \dots\le\mu _n)\in K$ and all integers $d\le r\le n$.
\end{Th}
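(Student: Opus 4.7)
By Theorem~\ref{main} and the classical case \cite[Theorem~3]{FW}, one can replace both the noisy quantum channel and the noiseless quantum channel by their classical counterparts, so the plan is to prove the equivalence in the purely classical setting: the $K$-noisy classical channel is simulable by the noiseless $d$-state classical channel iff \eqref{binom} holds.

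For the ``if'' direction, the plan is to decompose each column $x_j\in K$ of any input matrix $X$ as
\[
x_j=\frac{1}{\binom{n}{d}}\sum_{R}q_{R,j},
\]
where $R$ ranges over $d$-subsets of $[n]$ and $q_{R,j}$ is a probability distribution supported on $R$. Existence of such a decomposition is a transportation/flow problem (supply $1/\binom{n}{d}$ at each $R$, demand $x_j(i)$ at each $i\in[n]$, with $R\sim i$ iff $i\in R$), whose feasibility, by the Supply-Demand Theorem \cite[2.1.5.\ Corollary]{LoPlu}, reduces to the condition $x_j(T)\le 1-\binom{n-|T|}{d}/\binom{n}{d}$ for every $T\subseteq[n]$; passing to the complement $T'=[n]\setminus T$, this rewrites as $x_j(T')\ge\binom{|T'|}{d}/\binom{n}{d}$. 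The worst case --- $T'$ indexing the smallest coordinates of $x_j$ --- is the hypothesis \eqref{binom} applied to $x_j\in K$, using that $K$ is permutation invariant so we may sort freely. Once the $q_{R,j}$ are in hand, letting $Q_R$ be the $n\times l$ stochastic matrix with columns $q_{R,j}$, the transition matrix decomposes as $EX=\frac{1}{\binom{n}{d}}\sum_R EQ_R$, and each $EQ_R$ is concentrated on the $\le d$ rows in the image $f(R)\subseteq[k]$ of the function $f:[n]\to[k]$ encoded by $E$. Each such matrix is a convex combination of stochastic $0$-$1$ matrices with at most $d$ nonzero rows, completing the simulation.

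For the ``only if'' direction, suppose \eqref{binom} fails for some $\mu\in K$ at some $d\le r\le n$, and set $s=n-r$. The plan is to exhibit a concrete transition matrix of the $K$-noisy classical channel that violates the bound forced by simulability. Take $l=\binom{n}{s}$ inputs indexed by the $s$-subsets $U\subseteq[n]$, use the identity decoder $E=I_n$ (so $k=n$), and let the $U$-column of $A$ be any permutation $x_U\in K$ of $\mu$ placing the $s$ largest coordinates on $U$; permutation invariance of $K$ ensures $x_U\in K$. Then
\[
\sum_U\sum_{i\in U}A_{iU}=\binom{n}{s}(\mu_{r+1}+\dots+\mu_n).
\]
On the other hand, for any stochastic $0$-$1$ matrix $M$ with nonzero rows supported on $S\subseteq[n]$, $|S|\le d$, the same sum counts the $U$'s whose unique $1$ lies in $S\cap U$, which is at most $|\{U:S\cap U\ne\emptyset\}|\le\binom{n}{s}-\binom{n-d}{s}$; this bound is preserved under convex combinations. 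Invoking the routine identity $\binom{n-d}{s}\binom{n}{d}=\binom{n}{s}\binom{r}{d}$ (both count ordered triples of disjoint subsets of sizes $d$, $s$, $n-d-s$), the resulting upper bound on $\mu_{r+1}+\dots+\mu_n$ is exactly $1-\binom{r}{d}/\binom{n}{d}$, contradicting the assumed failure of \eqref{binom}.

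The one delicate step is the match, in the ``if'' direction, between the Hall/supply--demand condition and the sorted-eigenvalue form of \eqref{binom}, which relies on permutation invariance of $K$; after that, both directions rest on the single binomial identity $\binom{n-d}{s}/\binom{n}{s}=\binom{r}{d}/\binom{n}{d}$ and are otherwise combinatorial bookkeeping.
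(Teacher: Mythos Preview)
Your proof is correct and follows the same overall architecture as the paper's: reduce to the purely classical statement, prove the ``if'' direction by coupling each $\mu\in K$ to a uniformly random $d$-subset of $[n]$, and prove ``only if'' by exhibiting a $K$-noisy transition matrix that violates a linear bound forced by $d$-state simulability. The technical tools you invoke at each step are different, however, and the comparison is worth recording.

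For the ``if'' direction, the paper does not use the Supply--Demand Theorem. Instead it identifies an explicit extremal vector: letting $S$ be a uniform random $d$-subset and $\nu_r=\mathbb P(\max S=r)$, one has $\nu_1+\dots+\nu_r=\binom{r}{d}/\binom{n}{d}$, so hypothesis~\eqref{binom} is exactly the statement that $\nu$ majorizes $\mu$. Hence $\mu$ is a convex combination of permutations of $\nu$, and each permutation of $\nu$ is the law of a deterministic selection $m_\pi(S)\in S$ (take the element of $S$ with largest $\pi$-rank). This yields the desired coupling directly. Your transportation argument reaches the same conclusion without singling out $\nu$; it is more mechanical but equally valid, and makes the role of the Hall-type condition explicit.

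For the ``only if'' direction, the paper uses a single $n\times n!$ test matrix whose columns are \emph{all} permutations of $\mu$, and then appeals to the general inequality $\sum_{|S|=r}\min_j\sum_{i\in S}a_{ij}\ge\binom{n-d}{n-r}$ from \cite[Section~3]{FW}. Your construction instead builds a separate $n\times\binom{n}{s}$ matrix for each $r$ (with $s=n-r$) and bounds the linear functional $\sum_U\sum_{i\in U}A_{iU}$ by an elementary count; this is self-contained and avoids the external citation. Both arguments ultimately rest on the same identity $\binom{n-d}{n-r}/\binom{n}{r}=\binom{r}{d}/\binom{n}{d}$.
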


\begin{proof}  `Only if':  Let $\mu=(\mu_1\le \dots\le\mu _n)\in K$. Let $A=(a_{ij})$ be an $n\times n!$  stochastic matrix whose  columns are the $n!$ permutations of $\mu$.  Then $A$ is a transition matrix afforded by  the $K$-noisy channel, thus also by the noiseless $d$-state channel.  By \cite[Section 3]{FW}, we then have $$\binom nr  (\mu_1+\dots+\mu_r)=\sum_{|S|=r}\min_{j\in[l]}\sum_{i\in S}a_{ij}
\ge \binom{n-d}{n-r}$$ for all $d\le r\le n$, which is equivalent to \eqref{binom}.

`If': Let $S$ be a uniform random $d$-element subset of $[n]$.  It suffices to prove that, for any $\mu\in K$, there is a random element $m$ of $S$ whose distribution is given by $\mathbb P(m=r)=\mu_r$ for all $r=1, \dots, n$. We may assume $\mu_1\le \dots\le\mu _n$. Let $\nu_r=\mathbb P(\max S=r)$, then $$\nu_1+\dots+\nu_r=\mathbb P(\max S\le r)=\binom rd\bigg/\binom nd \le \mu_1+\dots+\mu_r,$$ so $\mu$ is a convex combination of the permutations of $\nu$. But $\nu$ is the distribution of the greatest element of $S$, so each permutation of $\nu$ is the  distribution of an element of $S$, thus $\mu$ is the distribution of a random element of $S$, as claimed.
\end{proof}

For $0\le\delta\le 1$,  the \emph{$\delta$-noisy  quantum channel of level $n$} affords transition matrices of the form $(\tr E_i\rho_j)$, where $E_1, \dots, E_k\in M_n(\C)$ is a POVM and $\rho_1, \dots, \rho_l\in M_n(\C)$ are density matrices with all eigenvalues $\ge\delta/n$. 
This channel is equivalent to the  $\delta$-noisy classical channel with $n$ states. This is a special case of Theorem~\ref{main}. Alternatively, it can be shown by  combining ideas  from the proofs of Theorem~\ref{ball}(1) and \cite[Theorem 3]{FW}.
\begin{Cor}\label{cor} Let $0\le\delta\le 1$. The signalling dimension of the $\delta$-noisy $n$-state classical (or, equivalently, $n$-level quantum) channel is $\lceil (1-\delta)n +\delta\rceil$.
\end{Cor}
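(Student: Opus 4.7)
The plan is to apply Theorem~\ref{noiseless} to the convex set
\[K = \{\mu \in \Delta_n : \mu_i \ge \delta/n \text{ for all } i\},\]
which is the set of eigenvalue distributions of $\delta$-noisy density matrices; the paragraph preceding the corollary already identifies this channel with the $\delta$-noisy classical $n$-state one. By Theorem~\ref{noiseless}, the signalling dimension is the smallest positive integer $d$ such that
\[\mu_1 + \dots + \mu_r \ge \binom{r}{d}\bigg/\binom{n}{d}\]
holds for all $\mu = (\mu_1 \le \dots \le \mu_n) \in K$ and all $d \le r \le n$. So the problem reduces to computing $\min_{\mu \in K}(\mu_1 + \dots + \mu_r)$ for each $r$ and choosing $d$ accordingly.

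For $1 \le r \le n-1$ this minimum equals $r\delta/n$: set $\mu_1 = \dots = \mu_r = \delta/n$ and spread the remaining mass $1 - r\delta/n$ equally among $\mu_{r+1}, \dots, \mu_n$, which is feasible because $\delta \le 1$ ensures $(1 - r\delta/n)/(n-r) \ge \delta/n$. For $r = n$ the sum is identically $1$, matching $\binom{n}{d}/\binom{n}{d} = 1$, so that inequality is automatic. Hence the only nontrivial constraints on $d$ are $r\delta/n \ge \binom{r}{d}/\binom{n}{d}$ for $d \le r \le n-1$, which after dividing by $r/n$ become
\[\delta \ge \prod_{j=1}^{d-1} \frac{r-j}{n-j}.\]

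Each factor $(r-j)/(n-j)$ is positive and increasing in $r$ for $r \ge d$, so the right-hand side is maximized at $r = n-1$, giving the single tight condition $\delta \ge (n-d)/(n-1)$, i.e.\ $d \ge n - (n-1)\delta = (1-\delta)n + \delta$. The smallest such integer $d$ is $\lceil (1-\delta)n + \delta \rceil$, as claimed. No real obstacle is expected: Theorem~\ref{noiseless} carries the analytic weight, and only an elementary extremal problem and a monotonicity check remain. The one mild subtlety is remembering that at $r = n$ the minimum is $1$ rather than $n\delta/n = \delta$, so that extreme value of $r$ does not spuriously force $\delta \ge 1$; this is exactly why the binding constraint comes from $r = n-1$.
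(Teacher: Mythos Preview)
Your proof is correct and follows essentially the same route as the paper: both apply Theorem~\ref{noiseless} to reduce to the constraints $r\delta/n \ge \binom{r}{d}/\binom{n}{d}$ for $d\le r\le n-1$ (discarding the automatic $r=n$ case), and both identify $r=n-1$ as the binding constraint, yielding $d\ge (1-\delta)n+\delta$. The only cosmetic difference is in that last step: the paper argues that the left side is linear in $r$ and the right side convex (and the inequality holds trivially for $r<d$), while you divide through by $r/n$ and observe that the resulting product $\prod_{j=1}^{d-1}(r-j)/(n-j)$ is monotone in $r$; both arguments arrive at the same telescoped value $(n-d)/(n-1)$.
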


\begin{proof}
The $\delta$-noisy $n$-state classical  channel can
be simulated by the noiseless $d$-state classical  channel  if and only if we have \begin{equation}\label{convex}r\delta/n\ge\binom rd\bigg/\binom nd\end{equation}  for all integers $d\le r\le n-1$  --- note that both sides of \eqref{binom} are 1 for $r=n$. In inequality~\eqref{convex}, the left hand side is linear in $r$, while the right hand side is convex for $r=0, 1, \dots$. Also, the inequality holds for $r=0,1,\dots, d-1$.  Therefore, it holds  for all integers $d\le r\le n-1$ if and only if it holds for $r=n-1$, i.e., $(n-1)\delta/n\ge (n-d)/n$, or, equivalently, $d\ge (1-\delta)n +\delta$.
\end{proof}

\subsection{Partial replacer quantum channels}\label{depol}
The usual mathematical model for a  noisy quantum channel is given in terms of a completely positive trace-preserving map $\mathcal N:M_m(\C)\to M_n(\C)$. Let $\ran \mathcal N$ stand for the set of density matrices $\mathcal N(\sigma)\in M_n(\C)$, where $\sigma\in M_m(\C)$ is a density matrix. The channel affords transition matrices of the form $(\tr E_i\rho_j)\in [0,1]^{k\times l}$, where $E_1$, \dots, $E_k$ is a POVM in $M_n(\C)$ and each $\rho_j$ is contained in  $\ran \mathcal N$. The \emph{signalling dimension} $\signdim\mathcal N$ of $\mathcal N$ is the signalling dimension of this channel, i.e., the smallest $d$ such that the channel can be simulated by the noiseless classical channel with $d$ states. If the spectrum of every $\rho\in\ran\mathcal N$ is contained in a  given permutation-invariant set $K\subseteq\Delta_n$, then  every transition matrix afforded by $\mathcal N$ is also afforded by the $K$-noisy quantum channel, so we can  can use Theorem~\ref{main}  to show that $\mathcal N$ can be simulated by the $K$-noisy classical channel. Then Theorem~\ref{noiseless} can be used to give an upper bound on the signalling dimension of $\mathcal N$.

An important special case is given by \emph{partial replacer channels}. Let $m\le n$. We embed $M_m(\C)$ into $M_n(\C)$  as the set of matrices that are zero outside of the upper  left $m$-square block.  We fix a density matrix $\rho\in M_n(\C)$. The \emph{replacer channel} $\mathcal N_\rho: M_m(\C)\to M_n(\C)$ is given by $\mathcal N_\rho(X)=(\tr X)\rho$. Given $0\le\delta\le 1$, the  \emph{partial replacer channel}  $\mathcal  N_\rho(\delta): M_m(\C)\to M_n(\C)$ is given by $\mathcal N_\rho(\delta)(X)=(1-\delta)X+\delta(\tr X)\rho$.  In \cite[Theorem 3]{DC} by Doolittle and Chitambar, it is shown that \begin{equation}\label{dcbounds}\lceil(1-\delta) m+\delta\rceil\le\signdim \mathcal N_\rho(\delta)\le \min\{m, \lceil(1-\delta) m+1\rceil\},\end{equation} and the upper bound is tight for the \emph{partial erasure channel}  given by the  \emph{erasure flag} $\rho$ which has entry 1 at position $(m+1, m+1)$  and zero elsewhere. Note that the difference between the upper and the lower bound in \eqref{dcbounds} is at most 1.

We shall now prove that the lower bound is tight if $m=n$ and $\rho$ is sufficiently mixed, in particular, if $\rho=\mathbf 1/n$ is the \emph{maximally mixed state}, yielding the \emph{partial depolarization channel}  $$\mathcal N(\delta)(X)=(1-\delta)X+(\delta/n)(\tr X)\mathbf 1.$$

From now on, we let $m=n$.
 Let $d=\lceil (1-\delta)n+\delta\rceil$ stand for the lower bound in  \eqref{dcbounds}.
Let $\mu_1\le\dots\le\mu_n$ stand for the eigenvalues of a  fixed density matrix $\rho$.

\begin{Pro} \begin{enumerate}
\item If $\delta(\mu_1+\dots+\mu_r)\ge\binom rd /\binom nd$ holds for $r=d, \dots, n-1$, then $\signdim \mathcal N_\rho(\delta)=d$.
\item 
The partial depolarization channel is equivalent to the $\delta$-noisy classical channel with $n$ states. 
\item The signalling dimension of the partial depolarization channel is $d$.
\end{enumerate}
\end{Pro}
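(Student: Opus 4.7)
The plan is to dispatch (2) first by a direct eigenvalue calculation, deduce (3) immediately from (2) together with Corollary~\ref{cor}, and prove (1) by identifying $\ran\mathcal N_\rho(\delta)$ as a subset of a $K$-noisy quantum state space and invoking Theorems~\ref{main} and~\ref{noiseless}.

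For (2), I would observe that for any density matrix $\sigma\in M_n(\C)$ with eigenvalues $x_1,\dots,x_n$, the image $\mathcal N(\delta)(\sigma)=(1-\delta)\sigma+(\delta/n)\mathbf 1$ has eigenvalues $(1-\delta)x_i+\delta/n\ge\delta/n$; conversely, any density matrix $\rho$ with all eigenvalues $\ge\delta/n$ equals $\mathcal N(\delta)(\sigma)$ for $\sigma=(\rho-(\delta/n)\mathbf 1)/(1-\delta)$ (the edge case $\delta=1$ being trivial). Thus $\ran\mathcal N(\delta)$ is precisely the set of density matrices allowed in the $\delta$-noisy $n$-level quantum channel, and the remark preceding Corollary~\ref{cor} identifies this with the $\delta$-noisy $n$-state classical channel, proving (2). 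Part (3) then follows by combining (2) with Corollary~\ref{cor}.

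For (1), the key step is to bound the sum of the $r$ smallest eigenvalues of any matrix in $\ran\mathcal N_\rho(\delta)$. By the Ky Fan variational formula $\sum_{i\le r}\lambda_i(A)=\min_{\dim V=r}\tr(P_V A P_V)$, this functional is concave on Hermitian matrices, so for any density matrix $X\in M_n(\C)$ the eigenvalues $\nu_1\le\dots\le\nu_n$ of $(1-\delta)X+\delta\rho$ satisfy
\[\nu_1+\dots+\nu_r\ge(1-\delta)(x_1+\dots+x_r)+\delta(\mu_1+\dots+\mu_r)\ge\delta(\mu_1+\dots+\mu_r),\]
where $x_1\le\dots\le x_n$ are the eigenvalues of $X$ and the last step uses $x_i\ge 0$. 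I would then take $K\subseteq\Delta_n$ to be the (evidently convex and permutation-invariant) set of $\xi\in\Delta_n$ whose sorted version $\xi_{(1)}\le\dots\le\xi_{(n)}$ satisfies $\xi_{(1)}+\dots+\xi_{(r)}\ge\delta(\mu_1+\dots+\mu_r)$ for every $r$. By the bound above, every density matrix in $\ran\mathcal N_\rho(\delta)$ is $K$-noisy, so every transition matrix afforded by $\mathcal N_\rho(\delta)$ is afforded by the $K$-noisy quantum channel. Theorem~\ref{main} simulates this by the $K$-noisy classical channel, which by Theorem~\ref{noiseless} is in turn simulated by the noiseless $d$-state classical channel once $\xi_{(1)}+\dots+\xi_{(r)}\ge\binom{r}{d}/\binom{n}{d}$ holds for every $\xi\in K$ and every $d\le r\le n$; by definition of $K$ the tightest instance is exactly the hypothesis $\delta(\mu_1+\dots+\mu_r)\ge\binom{r}{d}/\binom{n}{d}$ for $d\le r\le n-1$, the case $r=n$ reducing to $1\ge 1$. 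Combining with the lower bound $\signdim\mathcal N_\rho(\delta)\ge d$ from~\eqref{dcbounds} yields $\signdim\mathcal N_\rho(\delta)=d$.

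The only genuine ingredient is the Ky Fan (or Lidskii) eigenvalue inequality, which is classical and poses no real difficulty. Everything else is bookkeeping: introducing the right $K$, checking its two closure properties, and matching the spectral condition on $\rho$ with the hypothesis of Theorem~\ref{noiseless}. I expect the main conceptual point to be recognising that the lower bound on eigenvalue partial sums of $(1-\delta)X+\delta\rho$ reduces cleanly to a condition on the eigenvalues of $\rho$ alone.
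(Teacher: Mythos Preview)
Your proposal is correct and follows essentially the same route as the paper: identify a suitable permutation-invariant $K$ via a lower bound on partial eigenvalue sums, invoke Theorems~\ref{main}/\ref{noiseless} for the upper bound, and combine with \eqref{dcbounds} for the lower bound; parts (2) and (3) match exactly. The only minor variation is in the eigenvalue step for (1): the paper observes directly that $(1-\delta)\sigma+\delta\rho\ge\delta\rho$ in the Loewner order and applies Weyl monotonicity to get $\mu_1'+\dots+\mu_r'\ge\delta(\mu_1+\dots+\mu_r)$, which is marginally quicker than your Ky Fan concavity argument but amounts to the same thing.
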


\begin{proof} (1)
The eigenvalues $\mu_1'\le\dots\le\mu_n'$ of $\mathcal N_\rho(\delta) (\sigma)=(1-\delta)\sigma+\delta\rho\ge \delta\rho$ satisfy $\mu_1'+\dots+\mu_r'\ge\delta(\mu_1+\dots+\mu_r)$ for any density matrix $\sigma\in M_m(\C)$ and any $r=1, \dots, n$. Thus, $\mu_1'+\dots+\mu_r'\ge \binom rd /\binom nd$ for $r=1, \dots, n-1$, but also, trivially, for $r=n$. The claim  now follows from Theorem~\ref{noiseless} together with the first inequality in \eqref{dcbounds}.

(2)  The range $\ran \mathcal N(\delta) $  is the set of density matrices with all eigenvalues $\ge\delta/n$, so the  claim follows from Theorem~\ref{main}. 

(3)  follows from (2) together with Corollary~\ref{cor}.
\end{proof}

\section{Future research}\label{future} It is well known that quantum communication can outperform classical communication if entanglement is used cleverly. On the other hand, in certain scenarios not involving entanglement, it can be proved that passing from classical to  quantum cannot increase efficiency.

 A fundamental result in this direction is the Holevo bound \cite{H}
   which we now recall.  For any stochastic matrix $A=(a_{ij})\in[0,1]^{k\times l}$  and input probabilities $q_j\ge 0$ $(j=1, \dots, l)$ summing to 1,  we define the \emph{mutual information} $$\Info(A, q)=H(j)+H(i)-H(i,j).$$ Here $H$ stands for the Shannon entropy of a random variable, and 
 the joint distribution of the random pair $(i,j)$ is given by the probabilities $q_ja_{ij}$. Now,   for any density matrices $\rho_j\in M_n(\C)$  and any POVM $E_1, \dots, E_k\in M_n(\C)$, the 
 Holevo inequality  reads \begin{equation}\label{holevo}\Info(A,q)\le\chi,\end{equation}  where $a_{ij}=\tr E_i\rho_j$ and the  Holevo quantity $\chi$ is defined by $$\chi=S\left(\sum_{j=1}^l q_j\rho_j\right)-\sum_{j=1}^l q_jS(\rho_j),$$ where $S$ is von Neumann entropy, i.e., the Shannon entropy of the spectrum.  If all $\rho_j$ with $q_j>0$ commute, then a  POVM $E_1$, \dots $E_k$ can be found so that equality holds in \eqref{holevo}. Otherwise, the inequality is strict for any POVM.

 Another result in the above mentioned direction is \cite[Theorem 3]{FW}: the $n$-level quantum channel can be simulated by the $n$-state classical channel.

  It would be nice to unify these two results. Let  a probability distribution $q_1$, \dots, $q_l$ 
   be given.
    Can every quantum transition matrix $A=(a_{ij})=(\tr E_i\rho_j)\in[0,1]^{k\times l}$, where $E_1, \dots, E_k\in M_n(\C)$ is a POVM, and $\rho_1, \dots, \rho_l\in M_n(\C)$ are density matrices, be written as a convex combination $A=\sum
    p_IA_I$ of stochastic matrices $A_I$, each  with $\le n$ nonzero rows, and each satisfying $\Info(A_I, q)\le\chi $ ?
      Can the proof of Theorem~\ref{main} be modified to yield this result and thus, maybe,  a new proof  of Holevo's inequality? 

\bigskip

\noindent
{\bf Acknowledgement.}
I am grateful to Mih\'aly Weiner for useful conversations.


\begin{thebibliography}{99}


\bibitem{B}  R.\! B.\!  Bapat:  Mixed discriminants of positive semidefinite matrices. \it Linear Algebra Appl.\ \bf 126 \rm (1989), 107--124. \href{https://doi.org/10.1016/0024-3795(89)90009-8}{https:/\!/doi.org/10.1016/0024-3795(89)90009-8}
\bibitem{A} Michele Dall'Arno, Sarah Brandsen, Alessandro Tosini, Francesco Buscemi, and Vlatko Vedral: No-Hypersignaling Principle, Phys.\ Rev.\ Lett.\ 119 (2017), 020401.  \href{https://doi.org/10.1103/PhysRevLett.119.020401}{https:/\!/doi.org/10.1103/PhysRevLett.119.020401}
\bibitem{DC} Brian Doolittle, Eric Chitambar: Certifying the Classical Simulation Cost of a Quantum Channel, 	Phys.\ Rev.\ Research 3, 043073. \href{https://doi.org/10.1103/PhysRevResearch.3.043073}{https:/\!/doi.org/10.1103/PhysRevResearch.3.043073}
\bibitem{FW}P.\! E.\! Frenkel and M.\! Weiner: Classical information storage in an $n$-level quantum system, Communications in Mathematical Physics  340 (2015), 563--574. \href{https://doi.org/10.1007/s00220-015-2463-0}{https:/\!/doi.org/10.1007/s00220-015-2463-0}

\bibitem{H} A.\! S.\! Holevo: Bounds for the Quantity of Information Transmitted by a Quantum Communication Channel,  Probl.\ Peredachi Inf., 9:3 (1973), 3--11; Problems Inform.\ Transmission,  9:3 (1973), 177--183.




\bibitem{LoPlu}
L.\ Lov\'asz  and M.\ D.\ Plummer: Matching Theory. North-Holland, 1986.

\bibitem{MK} Keiji Matsumoto, Gen Kimura: Information-induced asymmetry of state space in view of general probabilistic theories, 
\href{https://doi.org/10.48550/arXiv.1802.01162}{
https:/\!/doi.org/10.48550/arXiv.1802.01162}
 \end{thebibliography}
\end{document}